\newcommand{\ceil}[1]{\lceil #1 \rceil}
\newtheorem{lem}{Lemma}
\newtheorem{thm}{Theorem}
\newtheorem{prp}{Proposition}
\newtheorem{cor}{Corollary}
\newtheorem{dfn}{Definition}
\newcommand{\BBox}{\rule{0.1in}{0.1in}}
\newenvironment{proof}{\noindent {\bf Proof:}}{\ \BBox \\*}
\newcommand{\floor}[1]{\lfloor #1 \rfloor}
\begin{document}

\title{A survey of Chernoff and Hoeffding bounds}

\author{Alexandros V. Gerbessiotis\thanks{CS Department, New Jersey Institute of Technology, 
Newark, NJ 07102, USA. Email: alexg@njit.edu}
}
\maketitle
\thispagestyle{empty}


\begin{abstract}
This is a survey paper that discusses the original bounds 
of the seminal papers by Chernoff and Hoeffding.
Moreover, it includes a variety of derivative bounds in a
variety of forms. Complete proofs are provided as needed.
The intent is to provide  a repository of reference bounds
for the interested researcher.
\end{abstract}


\section{Chernoff's method}

In \cite{C52} bounds on the tails of a set of Bernoulli
trials are discussed in the form of Theorem 1.
Theorem~1 of \cite{C52} is restated below as
after some relevant definitions.
One can extract a variety of bounds out of Theorem~\ref{chthm1}. 
Such bounds can lead to  a sequence of lemmas such as
Lemma~\ref{lem1},
Lemma~\ref{lem2},
Lemma~\ref{lem3}, and
Lemma~\ref{lem4} for the right tails,
and
Lemma~\ref{lem5}, and
Lemma~\ref{lem6}, and
Lemma~\ref{lem7} for the left tails,
and their associated corollaries along with
some obvious concentration bounds 
(left and right tail bounds).

\begin{dfn}
Let $X_i$, $i=1, \ldots , n$ be  independent random 
variables identical in distribution to   
random variable $X$ whose moment generating function 
is $M_X (t) = E( e^{tX} )$,
and its cumulative distribution function is 
$F_X (x) = P( X \leq x)$.
Let  $E(X)=p=\mu$ and  define
$S_n = \sum_{i=1}^{n} X_i$ and $E(S_n) = n E(X)=n \mu$,
and
\[
m ( r ) = \inf E( e^{t(X-r) } ) 
   = \inf e^{-rt} M_X (t) 
   = \inf e^{-rt} E( e^{tX} )
.
\]
The infimum is with respect to the $t$'s values.
$M_X (t)$ attains a minimum value $m(0)$.
\end{dfn}
\noindent
Skipping some other details, the $t$ value for
the minimum is finite unless $P(X>0)=0$ or $P(X<0)=0$
and then $m(0)=P(X=0)$.
If $P( X \leq 0 ) > 0$ and
   $P( X \geq 0 ) > 0$ then $m(0) > 0$.
The following is Theorem 1 of Chernoff \cite{C52}

\begin{thm}[Chernoff \cite{C52}]
\label{chthm1}
If $E(X) < \infty$ and $r \geq E(X)$ then
\begin{equation}
\label{cher1}
 P( S_n \geq nr ) \leq (m(r))^n .
\end{equation}
If $E(X) > - \infty$ and $r \leq E(X)$ then
\begin{equation}
\label{chel1}
 P( S_n \leq nr ) \leq (m(r))^n .
\end{equation}
If $0 < u < m(r)$ and $E(X)$ might not exist,
\begin{equation}
\label{chel2}
  \lim_{n \rightarrow \infty} \frac{(m(r)- u)^n}{P(S_n \leq nr)} = 0.
\end{equation}
\end{thm}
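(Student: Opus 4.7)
The three inequalities separate naturally into two upper-tail bounds (\ref{cher1}), (\ref{chel1}) proved by the classical Chernoff--Markov trick, and a lower-tail growth statement (\ref{chel2}) requiring a more delicate exponential tilt.

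For (\ref{cher1}), the plan is to apply Markov's inequality to the positive random variable $e^{tS_n}$ with $t \geq 0$: by independence,
\[
P(S_n \geq nr) \leq e^{-tnr} E\bigl[e^{tS_n}\bigr] = \bigl(e^{-tr} M_X(t)\bigr)^n,
\]
and then take the infimum over $t$. To identify this infimum with $m(r)$ I would use that $\phi(t) = e^{-rt} M_X(t)$ is log-convex, hence convex, with $\phi'(0) = E(X) - r \leq 0$ under the hypothesis $r \geq E(X)$; a convex function that is non-increasing at $0$ attains its global infimum in $[0, \infty)$, so the unrestricted infimum coincides with the infimum over $t \geq 0$. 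Inequality (\ref{chel1}) follows by the mirror argument with $t \leq 0$ and the reversed sign of $\phi'(0)$.

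For (\ref{chel2}) I would prove the matching lower bound $P(S_n \leq nr) \geq (m(r) - u)^n$ eventually via exponential tilting. Pick $t^* \leq 0$ at (or arbitrarily close to) the minimizer of $\phi$, and introduce the tilted law $dF^*(x) = e^{t^* x} dF_X(x)/M_X(t^*)$; at a true minimizer $E^*[X] = M_X'(t^*)/M_X(t^*) = r$. The change-of-measure identity $P(S_n \in A) = M_X(t^*)^n E^*[e^{-t^* S_n} \mathbf{1}_A]$, applied with $A = \{nr - \delta n \leq S_n \leq nr\}$ and combined with the estimate $e^{-t^* S_n} \geq e^{-t^* nr + t^* n \delta}$ on $A$ (using $t^* \leq 0$), gives
\[
P(S_n \leq nr) \geq (m(r))^n e^{t^* n \delta} P^*(A).
\]
The weak law of large numbers under the tilted measure yields $P^*(A) \geq c > 0$ for all large $n$, and choosing $\delta$ small enough that $e^{t^* \delta} \geq (m(r) - u)/m(r)$ then closes the argument, since $(m(r) - u)^n / P(S_n \leq nr) \leq c^{-1} \bigl((m(r) - u)/(m(r) e^{t^* \delta})\bigr)^n \to 0$.

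The main obstacle lies in the generality advertised in (\ref{chel2}), where $E(X)$ need not exist and the infimum defining $m(r)$ may not be attained at any finite $t^*$. I would handle this by approximation: for small $\epsilon > 0$ pick a finite $t^*$ with $e^{-rt^*} M_X(t^*) \leq m(r) + \epsilon$, note that $M_X$ is automatically finite on a neighborhood of such an interior $t^*$ so the tilted law has enough moments for the LLN step, and then absorb the $\epsilon$ perturbation into the slack $u$ by taking $\epsilon$ sufficiently small. A few degenerate cases (for example $P(X = r) = 1$) have to be verified separately but are immediate.
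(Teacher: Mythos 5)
Your proofs of Eq.~(\ref{cher1}) and Eq.~(\ref{chel1}) take the same route as the paper: Markov's inequality applied to $e^{tS_n}$, factorization of $E(e^{tS_n})$ into $(M_X(t))^n$ by independence, and optimization over $t$ of one sign. You are in fact more careful than the paper on one point: the paper's chain (its Eq.~(\ref{che3})) silently replaces $\inf_{t>0} e^{-rt}M_X(t)$ by $m(r)$, whose defining infimum runs over \emph{all} real $t$; your observation that $\phi(t)=e^{-rt}M_X(t)$ is convex with $\phi'(0)=E(X)-r\leq 0$, so that its global infimum is attained on $t\geq 0$, is precisely the missing justification, and it is the only place the hypothesis $r\geq E(X)$ enters. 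The substantive difference is Eq.~(\ref{chel2}): the paper offers no proof of it at all (its proof ends once (\ref{chel1}) is handled), while you outline the exponential-tilting, change-of-measure lower bound, which is the method of Chernoff's original paper. So your proposal covers strictly more of the stated theorem than the paper's own proof does.

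One step of that outline fails as stated, however. You appeal to the weak law of large numbers under the tilted measure to conclude $P^*(A)\geq c>0$ for $A=\{nr-\delta n\leq S_n\leq nr\}$. Since the tilted mean is exactly $r$, the window $A$ is one-sided with its right endpoint sitting at the mean: the WLLN gives $P^*(|S_n/n-r|\leq\delta)\to 1$ but cannot apportion that mass between $\{S_n\leq nr\}$ and $\{S_n>nr\}$, so it does not exclude $P^*(S_n\leq nr)\to 0$. The standard repair is the central limit theorem: an interior minimizer $t^*$ has $M_X$ finite in a neighborhood, hence the tilted law has all moments, and in the non-degenerate case $P^*(S_n\leq nr)\to 1/2$; the degenerate case is $P(X=r)=1$, which you already flag. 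Separately, when $r$ exceeds $E(X)$, so that the unconstrained minimizer of $\phi$ sits at positive $t$, your restriction to $t^*\leq 0$ forces $t^*=0$; then the tilted law is the original one, its mean lies below $r-\delta$ for small $\delta$, and $P^*(A)\to 0$, so the tilting argument breaks there. That case must instead be dispatched trivially: $P(S_n\leq nr)$ stays bounded away from $0$ by the law of large numbers, while $(m(r)-u)^n\to 0$ because $m(r)\leq M_X(0)=1$ and $u>0$. With these two repairs your argument for (\ref{chel2}) goes through; as written, it does not.
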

\begin{proof}
In order to prove Eq.(\ref{cher1}) we  perform the following
transformation, using in the last step Markov's inequality,
and before that the monotonically increasing function
$X \mapsto e^{tX}$. Consider a $t>0$ and the monotonically
increasing function $f(x) = e^{tx}$.
\begin{eqnarray}
 P( S_n \geq nr ) &=&    P( t S_n    \geq t nr)  \nonumber \\
                 &=&    P( e^{tS_n} \geq e^{nrt} )\nonumber  \\
                 &\leq& \frac{ E( e^{tS_n} )}{e^{nrt}} \nonumber \\
\label{chel2a}
                 &\leq& e^{-nrt} E( e^{tS_n} ) \\
\label{chel2b}
                 &=   & e^{-nrt} M_{S_n} (t) .
\label{che1}
\end{eqnarray}
We then examine $ M_{S_n} (t) = E( e^{tS_n} )$. 
Since $S_n = \sum_i X_i$, then
\begin{eqnarray}
E( e^{tS_n} ) &=& E( e^{t \sum_{i} X_i } ) \nonumber \\
              &=& E( \prod_{i} e^{t  X_i } ) \nonumber \\
              &=&    \prod_{i} E( e^{t  X_i } ) \nonumber \\
              &=&    \prod_{i} M_{X_i} (t) \nonumber \\
              &=&     \left(   M_{X} (t) \right)^n .
\label{che2}
\end{eqnarray}
From Eq.~(\ref{che1}) by way of Eq.~(\ref{che2}) we obtain
the following.
\begin{eqnarray}
 P( S_n \geq nr ) 
    &\leq& e^{-nrt} E( e^{tS_n} )  \nonumber \\
    &\leq& e^{-nrt} \left(   M_{X} (t) \right)^n \nonumber \\
    & = & \left( e^{-rt}   M_{X} (t) \right)^n \nonumber \\
    & = & \inf_{t>0} \left( e^{-rt}   M_{X} (t) \right)^n \nonumber \\
    & = & \inf_{t>0} \left( m(r)  \right)^n .
\label{che3}
\end{eqnarray}
This completes the proof of Eq.(\ref{cher1}).
$ $ \\ $ $
The proof of Eq.(\ref{chel1}) is similar. Consider a $t < 0$ now.
\begin{eqnarray}
 P( S_n \leq nr ) &=&    P( t S_n    \geq t nr)  \nonumber \\
                 &=&    P( e^{tS_n} \geq e^{nrt} )\nonumber  \\
                 &\leq& \frac{ E( e^{tS_n} )}{e^{nrt}} \nonumber \\
                 &\leq& e^{-nrt} E( e^{tS_n} ) \nonumber \\
                 &=   & e^{-nrt} M_{S_n} (t) .
\label{che4}
\end{eqnarray}
The rest is identical to the previous case.
\end{proof}

In most of the discussion to follow we will assume
that $X_i$ and $X$ follow a Bernoulli distribution i.e.
$X_i \sim b(p)$ and $X \sim b(p)$ and thus $E(X_i ) = E(X)=p$,
where $0 < p < 1$. 
Then, $S_n \sim B(n,p)$.
Therefore we have the following.
\[
 M_X (t) = E( e^{tX} ) =  e^t \cdot p + 1 \cdot (1-p) 
         = 1 + p( e^t -1 ) \leq e^{pe^t -p}.
\]
The last part is because for all $x$, we have $e^x \geq 1+x$.

\begin{dfn}[Kullback-Leibler]
The Kullback-Leibler divergence $D_{KL}$ or just simply $D$
is defined as follows for two distributions of $n$ elements
$P= \cup_i \{ p_i \} $ , $p_i \geq 0$,
$Q= \cup_i \{ q_i \} $ , $q_i \geq 0$, 
$i=1, \ldots , n$ 
such that $\sum_i p_i = \sum_i q_i =1$.
\begin{equation}
\label{kldiv}
D_{KL} (P || Q ) = D( P || Q) = \sum_i p_i \ln{\frac{p_i}{q_i}}.
\end{equation}
\end{dfn}

\section{Derived right tails}

We derive the first Chernoff bound for a Binomial r.v.
$S_n$ which is the sum of $n$ Bernoulli random variables,
using Eq.(\ref{cher1}).

\begin{lem}
\label{lem1}
Let $X_i$ be independent random variables  following
the same distribution as random variable $X$, $X \sim b(p)$, 
where $0<p<1$.
Let $S_n = \sum_i X_i$.
Then, for any $r$ such that $p < r < 1$ we have the following.
\begin{equation}
\label{rche1}
 P( S_n \geq rn ) \leq \exp{( - D(r || p) n)}
                = \left[ \left( \frac{p}{r}     \right)^r
                         \left( \frac{1-p}{1-r} \right)^{1-r}
                  \right]^n .
\end{equation}
\end{lem}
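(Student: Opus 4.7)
The plan is to specialize Theorem~\ref{chthm1} to the Bernoulli case and then explicitly compute the infimum defining $m(r)$. Since $X \sim b(p)$ and the hypothesis $p < r < 1$ gives $r > p = E(X)$, Eq.~(\ref{cher1}) applies directly and yields $P(S_n \geq rn) \leq (m(r))^n$. So the entire task reduces to identifying $m(r)$ for a Bernoulli random variable and then showing that $-\ln m(r) = D(r\|p)$ with $D$ interpreted on the two-point distributions $\{r,1-r\}$ and $\{p,1-p\}$.

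For the Bernoulli moment generating function the paper already notes $M_X(t) = 1 + p(e^t-1)$, so I would study
\[
g(t) \;=\; e^{-rt} M_X(t) \;=\; (1-p)\,e^{-rt} + p\,e^{(1-r)t},
\]
for $t>0$. Differentiating, $g'(t) = -r(1-p)e^{-rt} + p(1-r)e^{(1-r)t}$, and setting $g'(t^{*})=0$ gives the unique critical point
\[
e^{t^{*}} \;=\; \frac{r(1-p)}{p(1-r)}.
\]
Because $r>p$ implies $r(1-p) > p(1-r)$, we get $t^{*}>0$, so the minimum is attained inside the admissible region; convexity of $g$ (as a positive combination of exponentials) guarantees this critical point is the global minimum.

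Substituting $e^{t^{*}}$ back into $g$ is a short calculation: the bracket $1-p+pe^{t^{*}}$ collapses to $(1-p)/(1-r)$, and $e^{-rt^{*}} = \bigl(p(1-r)/(r(1-p))\bigr)^{r}$. Combining these pieces and regrouping factors gives
\[
m(r) \;=\; g(t^{*}) \;=\; \left(\frac{p}{r}\right)^{r} \left(\frac{1-p}{1-r}\right)^{1-r},
\]
which is exactly the claimed bracket. Raising to the $n$th power via Eq.~(\ref{cher1}) finishes the right-hand equality of (\ref{rche1}), and taking $-\ln$ of $m(r)$ recovers $r\ln(r/p) + (1-r)\ln((1-r)/(1-p)) = D(r\|p)$, which gives the exponential form.

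The only step that requires any care is the algebraic simplification after substituting $t^{*}$; the calculus itself is routine and the bound $t^{*}>0$ follows for free from $r>p$. I do not foresee a genuine obstacle, but I would double-check the exponent bookkeeping so that the powers of $p/r$ and $(1-p)/(1-r)$ come out as $r$ and $1-r$ rather than the other way around, which is the most common slip in this computation.
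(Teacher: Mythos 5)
Your proposal is correct and follows essentially the same route as the paper: both start from Eq.~(\ref{che3}), substitute the Bernoulli moment generating function $M_X(t)=1+p(e^t-1)$, locate the same optimizer $e^{t^*}=r(1-p)/(p(1-r))$, and simplify to obtain the bracket $\left(\frac{p}{r}\right)^r\left(\frac{1-p}{1-r}\right)^{1-r}$. The only cosmetic difference is that you minimize $g(t)=e^{-rt}M_X(t)$ directly (using convexity of a positive combination of exponentials), while the paper minimizes its logarithm $f(t)=\ln(1+p(e^t-1))-rt$ via a second-derivative check; the computations are otherwise identical.
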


Sometimes $r=p+t$ and the bound $p<r<1$ 
becomes $p<p+t <1$ or equivalently $0 < t < 1-p$. 
We report this variant in Corollary~\ref{cor2e}.

\begin{proof}(Of Eq.~(\ref{rche1}))
$ $ \\ $ $
By Eq.~(\ref{che3}) we have the following
 considering that $M_X (t) = E( e^{tX} ) 
         = 1 + p( e^t -1 ) \leq e^{pe^t -p}$.
\begin{eqnarray}
 P( S_n \geq nr ) 
   &\leq& \inf_{t>0} \left( e^{-rt}   M_{X} (t) \right)^n 
      \nonumber \\
   &\leq& \inf_{t>0} \left( e^{-rt}   (1 + p( e^t -1 )) \right)^n 
      \nonumber \\
\label{rche1a}
   &\leq& \inf_{t>0} \left( \frac{(1 + p( e^t -1 ))}{e^{rt}} \right)^n 
                \\
\label{rche1b}
   &\leq& \inf_{t>0} \left(  \exp{(f(t))} \right)^n 
\end{eqnarray}
As indicated by Eq.~(\ref{rche1b}),
$f(t) = \ln{( (1 + p( e^t -1 ))} -rt$.
Consider $f^{'} (t) = pe^t / (1+p e^t -p) -r$.
Equating to zero $f^{'} (t)=0$ and solving for $t$
we obtain $e^t = (1-p)r/(p(1-r))$.
Continuing with the second derivative we find
$f^{''} (t) = p(1-p) e^t / (1+pe^t -p)^2 >0$.
Therefore $f(t)$ has a minimum for $ e^t = (1-p)r/(p(1-r))$.
We then continue with Eq.~(\ref{rche1a}) as follows.
\begin{eqnarray}
 P( S_n \geq nr ) 
\label{rche1c}
   &\leq& \inf_{t>0} \left( \frac{(1 + p( e^t -1 ))}{e^{rt}} \right)^n 
\end{eqnarray}
The denominator $\exp{(rt)}$ for $e^t = (1-p)r/(p(1-r))$
is as follows.
\begin{equation}
\label{rche1d}
\exp{(rt)} = \left( (1-p)r/(p(1-r)) \right)^r =
           \frac{(1-p)^r r^r}{p^r (1-r)^r}.
\end{equation}
Likewise the numerator is as follows.
\begin{equation}
\label{rche1e}
(1 + p( e^t -1 )) = \frac{(1-p)rp}{p(1-r)}+1-p = \frac{1-p}{1-r}
\end{equation}
Therefore we have the following for the quantity below.
\begin{equation}
\label{rche1f}
\frac{(1 + p( e^t -1 ))}{e^{rt}} 
= \frac{1-p}{1-r} \cdot  \frac{p^r (1-r)^r}{(1-p)^r r^r}
                = \left[ \left( \frac{p}{r}     \right)^r
                         \left( \frac{1-p}{1-r} \right)^{1-r}
                  \right]^n .
\end{equation}
The proof is completed.
\end{proof}

We generate one more bound from Eq.(\ref{cher1}).
The bound is stronger than the Corollaries to follow that bring
it into an easier to deal with form.

\begin{lem}
\label{lem2}
Let $X_i$ be independent random variables  following
the same distribution as random variable $X$, $X \sim b(p)$, 
where $0<p < 1$.
Let $S_n = \sum_i X_i$.
Then, for any $r$ such that $p < r < 1$ we have the following,
after we substiture $r=(1+\delta ) p$, $\delta > 0$.
\begin{equation}
\label{rche2}
 P( S_n \geq rn ) =
 P( S_n \geq (1+ \delta) pn ) 
   \leq            \left(  \frac{e^r \cdot p^r}{e^p \cdot r^r} \right)^n 
     =             \left(  \frac{e^{\delta }}
                                  {(1+\delta)^{(1+\delta )}} 
                     \right)^{pn} 
\end{equation}
\end{lem}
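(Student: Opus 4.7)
The plan is to mimic the derivation of Lemma~\ref{lem1}, but to replace the exact moment generating function $M_X(t) = 1 + p(e^t - 1)$ with the weaker (yet smoother) upper bound $M_X(t) \leq e^{p(e^t - 1)}$ already recorded just before the Kullback--Leibler definition. This swap is what creates the gap between the bound of Lemma~\ref{lem1} and the bound of Lemma~\ref{lem2}; the latter is easier to manipulate because its exponent is linear in $p$ after one takes logs.

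Starting from Eq.~(\ref{che3}), I would write
\[
P(S_n \geq nr) \;\leq\; \inf_{t > 0} \bigl(e^{-rt} M_X(t)\bigr)^n
              \;\leq\; \inf_{t > 0} \bigl(e^{-rt} e^{p(e^t - 1)}\bigr)^n
              \;=\; \inf_{t>0} \bigl(\exp g(t)\bigr)^n,
\]
where $g(t) = -rt + p(e^t - 1)$. Then I would run the usual one-variable optimization: $g'(t) = -r + pe^t$ vanishes at $e^{t^*} = r/p$, and because $r > p$ we have $t^* = \ln(r/p) > 0$, which is admissible. The second derivative $g''(t) = pe^t > 0$ confirms this critical point is a minimum.

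Plugging $t^*$ back in gives $g(t^*) = -r\ln(r/p) + p(r/p - 1) = -r\ln(r/p) + (r - p)$, and hence
\[
\exp g(t^*) \;=\; \left(\frac{p}{r}\right)^r e^{\,r - p} \;=\; \frac{e^r\, p^r}{e^p\, r^r},
\]
which after raising to the $n$-th power yields the first form in Eq.~(\ref{rche2}).

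The only remaining step is the substitution $r = (1+\delta)p$ to produce the second equality. Factoring the $p^r$ in the numerator against $r^r = (1+\delta)^r p^r$ in the denominator cancels $p^r$; the exponential factor becomes $e^{r-p} = e^{\delta p}$; and the denominator exponent $r = (1+\delta)p$ turns $(1+\delta)^r$ into $((1+\delta)^{1+\delta})^p$. Raising to $n$ then produces the $pn$-th power form stated in the lemma. There is no conceptual obstacle here; the only care needed is in the algebraic bookkeeping to show that $p$ factors out of every exponent simultaneously after the substitution so that the final bound has $pn$ (rather than $n$) in the outer exponent.
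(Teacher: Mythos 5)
Your proposal is correct and follows essentially the same route as the paper's own proof: both start from Eq.~(\ref{che3}), replace $M_X(t)$ by the bound $e^{p(e^t-1)}$, minimize the exponent $pe^t - p - rt$ at $t=\ln(r/p)$ with a second-derivative check, and then substitute $r=(1+\delta)p$ to extract the $pn$-power form. The only (welcome) refinement is your explicit observation that $t^*=\ln(r/p)>0$ is admissible because $r>p$, a point the paper leaves implicit.
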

\begin{proof}
By Eq.~(\ref{che3}) we use
$M_X (t) = E( e^{tX} )   = 1 + p( e^t -1 ) \leq e^{pe^t -p}$,
to obtain the following.
\begin{eqnarray}
 P( S_n \geq nr ) 
   &\leq& \inf_{t>0} \left( e^{-rt}   (1 + p( e^t -1 )) \right)^n 
      \nonumber \\
\label{rche2a}
   &\leq& \inf_{t>0} \left( e^{-rt}   e^{pe^t -p} \right)^n 
      \\
   &\leq& \inf_{t>0} \left( \exp{( pe^t -p-rt)} \right)^n 
      \nonumber \\
\label{rche2b}
   &\leq& \inf_{t>0} \left(  \exp{(f(t))} \right)^n 
\end{eqnarray}
The second to last part is because for all $x$, we have 
$e^x \geq 1+x$.
By Eq.~(\ref{rche2a}) and Eq.~(\ref{rche2b}) we
have $f(t) =  pe^t -p-rt$. Since $f^{'} (t) = pe^t -r$,
setting $f^{'} (t) = 0$ we obtain $t = \ln{(r/p)}$.
Moreover, $f^{''}(t) = p e^t$ is equal to
$f^{''} (  \ln{(r/p)} ) =r > 0$. Therefore $f(t)$
has a minimum at $t = \ln{(r/p)}$. Substituting this
value for $t$ in Eq.(\ref{rche2b}) the following
is obtained.

\begin{eqnarray}
 P( S_n \geq nr ) 
   &\leq& \inf_{t>0} \left( \exp{( pe^t -p-rt)} \right)^n 
      \nonumber \\
   &\leq&            \left(  \exp{( p (r/p) -p -r \ln{(r/p)}} \right)^n 
      \nonumber \\
   &\leq&            \left(  \frac{e^r \cdot p^r}{e^p \cdot r^r} \right)^n 
\label{rche2c}
\end{eqnarray}
Finally we substitute $r=(1 + \delta ) p$ in Eq.~(\ref{rche2c})
to obtain our result
\begin{eqnarray}
 P( S_n \geq nr ) 
   &\leq&            \left(  \frac{e^{(1+\delta )p} 
                              \cdot 
                                   p^{(1+\delta )p}}
                                  {e^p \cdot ((1+\delta)p)^{(1+\delta ) p}} 
                     \right)^n  \nonumber \\
\label{rche2d}
   &\leq&            \left(  \frac{e^{\delta }}
                                  {(1+\delta)^{(1+\delta )}} 
                     \right)^{pn} 
\end{eqnarray}
\end{proof}


A simpler proof of Eq.~(\ref{rche1}) for binomial random
variables is found in \cite{C79} and the proof is presented below.

\begin{lem}[\cite{C79}]
\label{lem3}
Let $X_i$ be independent Bernoulli random variables  following
the same distribution as random variable
 $X$, $X \sim b(p)$, where $0<p<1$.
Let $S_n = \sum_i X_i$.
Then, for any $r$ such that $p < r < 1$ we have the following.
\begin{equation}
\label{rche3}
 P( S_n \geq rn ) \leq \exp{( - D(r || p) n)}
                = \left[ \left( \frac{p}{r}     \right)^r
                         \left( \frac{1-p}{1-r} \right)^{1-r}
                  \right]^n .
\end{equation}
\end{lem}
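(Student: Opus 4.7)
The plan is to give a direct combinatorial argument that avoids the moment generating function and the optimization step used in Lemma~\ref{lem1}. The key idea is to compare the binomial probability with parameter $p$ term-by-term against the binomial probability with parameter $r$, and to exploit monotonicity to pull a worst-case ratio outside the sum.

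First I would write the tail probability as a sum, assuming for simplicity that $rn$ is an integer (otherwise replace it by $\lceil rn \rceil$, which only strengthens the statement):
\begin{equation*}
P(S_n \geq rn) \;=\; \sum_{k=rn}^{n} \binom{n}{k} p^k (1-p)^{n-k}.
\end{equation*}
Next I would multiply and divide each term by $r^k (1-r)^{n-k}$, giving
\begin{equation*}
P(S_n \geq rn) \;=\; \sum_{k=rn}^{n} \binom{n}{k} r^k (1-r)^{n-k} \cdot \left(\frac{p}{r}\right)^{k} \left(\frac{1-p}{1-r}\right)^{n-k}.
\end{equation*}

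Then comes the main step: bound the correction factor $\left(\frac{p}{r}\right)^{k} \left(\frac{1-p}{1-r}\right)^{n-k}$ uniformly. Since $p < r$ we have $p/r < 1$ and $(1-p)/(1-r) > 1$, so the factor is maximized over the summation range $k \geq rn$ at $k = rn$, yielding
\begin{equation*}
\left(\frac{p}{r}\right)^{k} \left(\frac{1-p}{1-r}\right)^{n-k} \;\leq\; \left(\frac{p}{r}\right)^{rn} \left(\frac{1-p}{1-r}\right)^{(1-r)n} \;=\; \left[\left(\frac{p}{r}\right)^{r}\left(\frac{1-p}{1-r}\right)^{1-r}\right]^{n}.
\end{equation*}
Pulling this (now $k$-independent) factor out of the sum leaves behind $\sum_{k=rn}^{n} \binom{n}{k} r^k (1-r)^{n-k}$, which is at most $1$ since it is a tail probability for a Binomial$(n,r)$ random variable. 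Combining these observations delivers exactly Eq.~(\ref{rche3}).

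The only real subtlety, and the step that needs a careful line, is the uniform bound on the correction factor: one must verify that as $k$ ranges over $\{rn, rn+1, \ldots, n\}$, the monotonicity in $k$ of both $(p/r)^k$ (decreasing) and $((1-p)/(1-r))^{n-k}$ (also decreasing in $k$, since the base exceeds $1$ and the exponent $n-k$ decreases) point in the same direction, so the product is maximized at the left endpoint $k = rn$. Aside from this, the argument is entirely elementary: no derivative, no infimum, no $e^x \geq 1+x$ inequality is needed, which is precisely why this proof is ``simpler'' than the one given for Lemma~\ref{lem1}.
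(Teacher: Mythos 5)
Your proof is correct, and it takes a genuinely different route from the paper's, which follows Chv{\'a}tal \cite{C79}. The paper multiplies each tail term by $x^{i-k}\geq 1$ (for an arbitrary $x\geq 1$), extends the sum over the whole range $0\leq i\leq n$, collapses it by the binomial theorem to $x^{-k}\bigl(1+(x-1)p\bigr)^n$, and only then substitutes the optimizing value $x=(1-p)r/(p(1-r))$. You instead perform a change of measure: each term ${n \choose k}p^k(1-p)^{n-k}$ is rewritten as the Binomial$(n,r)$ probability ${n \choose k}r^k(1-r)^{n-k}$ times the likelihood ratio $(p/r)^k\bigl((1-p)/(1-r)\bigr)^{n-k}$; this ratio is decreasing in $k$ (the ratio of consecutive values is $p(1-r)/(r(1-p))<1$ because $p<r$), hence is at most its value at $k=rn$, namely $\exp{(-D(r||p)n)}$, and the leftover Binomial$(n,r)$ tail sum is at most $1$. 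Your monotonicity check is sound, and your treatment of non-integer $rn$ works because the correction factor, viewed as a function of a real variable, is decreasing, so every $k\geq\lceil rn\rceil$ obeys the same endpoint bound. As for what each approach buys: Chv{\'a}tal's argument keeps the free parameter $x$, which lets the identical computation handle arbitrary cut-offs $k$ and carries over to the hypergeometric distribution that is the actual subject of \cite{C79}; note also that the paper's proof of this lemma, like yours, involves no differentiation --- the optimal $x$ is simply plugged in, so calculus-freeness is not unique to your version. What your version adds is that it eliminates even the choice of parameter (the tilting distribution is forced to be Binomial$(n,r)$) and it makes the appearance of the Kullback--Leibler divergence transparent: $D(r||p)$ arises directly as the exponential rate of the likelihood ratio evaluated at the boundary point of the tail.
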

\begin{proof}
Let $B(n,p,k) = \sum_{i=k}^n B(n,p;i) $.
For any $x \geq 1$, we have
\begin{eqnarray}
B(n,p,k) &\leq& \sum_{i=0}^{n} {n \choose i} p^{i} (1-p)^{n-i} x^{i-k} 
         \nonumber \\
         &\leq& x^{-k} \sum_{i=0}^{n} {n \choose i} p^{i} (1-p)^{n-i} x^{i} 
         \nonumber \\
         &\leq& x^{-k} \sum_{i=0}^{n} {n \choose i} (px)^{i} (1-p)^{n-i} 
         \nonumber \\
\label{rche3a}
         &\leq& x^{-k} (1+(x-1)p)^n.
\end{eqnarray}
For $r>p$ consider $x = (1-p)r /(p(1-r))$ and substitute for the $x$
of Eq.~(\ref{rche3a}). We obtain the following.
\begin{eqnarray}
B(n,p,k) 
    &\leq& x^{-k} (1+(x-1)p)^n  \nonumber \\
    &\leq& \left( \frac{(1-p)r}{p(1-r)}\right)^{-rn} \cdot
           \left( 1+ \left( \frac{(1-p)r}{p(1-r)}-1)p\right) \right)^{n} 
\label{rche3b}
\end{eqnarray}
The result then follows.
\end{proof}

We present  an alternative Chernoff bound formulation
which follows the simple case of Hoeffding bounds, where
$a_i = a =0$ and $b_i = b =1$. Naturally it applies
to Bernoulli or binary random variables (e.g. Rademacher).

\begin{lem}
\label{lem4}
Let $X_i$ be independent Bernoulli random variables  following
the same distribution as random variable $X$, $X \sim b(p)$,
where $0<p<1$.
Let $S_n = \sum_i X_i$.
Then, for any $r$ such that $p < r < 1$ we have the following.
\begin{equation}
\label{elem4}
 P( S_n \geq rn ) \leq 
                  \exp{\left( -2n (r-p)^2 \right) }.
\end{equation}
\end{lem}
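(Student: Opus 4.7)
The plan is to apply the Chernoff framework from Theorem~\ref{chthm1}, but to replace the crude bound $M_X(t) \le e^{pe^t - p}$ used in Lemmas~\ref{lem1}--\ref{lem3} with a Gaussian-type bound on the \emph{centered} Bernoulli variable. The key auxiliary fact is \emph{Hoeffding's lemma}: if $Y$ has mean $0$ and $a \le Y \le b$ almost surely, then $E(e^{tY}) \le \exp(t^2(b-a)^2/8)$ for every real $t$.

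Given this, I would first center the summands: let $Y_i = X_i - p$, so that $E(Y_i) = 0$ and $Y_i \in [-p,\,1-p]$; in particular $b - a = 1$. For any $t > 0$, Markov's inequality applied to $e^{t\sum_i Y_i}$, combined with independence and then Hoeffding's lemma applied to each factor, yields
\[
P(S_n \ge rn) \;=\; P\!\left(\sum_i Y_i \ge n(r-p)\right) \;\le\; e^{-tn(r-p)} \prod_i E(e^{tY_i}) \;\le\; \exp\!\left(-tn(r-p) + nt^2/8\right).
\]
The exponent is a quadratic in $t$ whose minimum over $t > 0$ occurs at $t^\star = 4(r-p)$, which is legal because $r > p$, and whose minimum value is $-2n(r-p)^2$. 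Substituting $t = t^\star$ reproduces exactly Eq.~(\ref{elem4}).

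The central step, and the main obstacle, is Hoeffding's lemma itself. I would establish it by convexity of $y \mapsto e^{ty}$: for $y \in [a,b]$, writing $y$ as a convex combination of the endpoints gives $e^{ty} \le \frac{b-y}{b-a}\,e^{ta} + \frac{y-a}{b-a}\,e^{tb}$. Taking expectations and using $E(Y) = 0$ reduces the bound to $E(e^{tY}) \le e^{\phi(h)}$, where after the substitutions $h = t(b-a)$ and $u = -a/(b-a) \in [0,1]$ one has $\phi(h) = -hu + \ln(1 - u + u e^h)$. The delicate inequality $\phi(h) \le h^2/8$ follows from Taylor's theorem with Lagrange remainder once one checks $\phi(0) = 0$, $\phi'(0) = 0$, and $\phi''(h) = z(1-z)$ for $z = u e^h/(1 - u + u e^h) \in [0,1]$, since $z(1-z) \le 1/4$ always. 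An alternative route that bypasses Hoeffding's lemma is to combine Lemma~\ref{lem1} with Pinsker's inequality $D(r \,||\, p) \ge 2(r-p)^2$, but the MGF-based argument above is cleaner and more self-contained in the style of the rest of this section.
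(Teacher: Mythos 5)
Your proposal is correct and follows essentially the same route as the paper's own proof: center the variables, apply Markov's inequality to the exponential moment, invoke Hoeffding's lemma (the paper's Proposition~\ref{phoef2}, with $b-a=1$) on each factor $E(e^{t(X_i-p)})$, and minimize the resulting quadratic exponent at $t=4(r-p)$ to obtain $-2n(r-p)^2$. The only cosmetic difference is that you re-derive Hoeffding's lemma inline (by the same convexity-plus-Taylor argument the paper uses to prove Proposition~\ref{phoef2}), whereas the paper simply cites that proposition.
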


\begin{proof}
The proof follows the steps of the proof of Theorem~\ref{chthm1}
to Eq.(\ref{chel2a}).
\begin{equation}
\label{elem4a}
 P( S_n \geq nr ) =    P( t S_n    \geq t nr) 
                  \leq \frac{ E( e^{tS_n} )}{e^{nrt}} 
                 \leq e^{-nrt} E( e^{tS_n} ) 
\end{equation}
We then proceed differently as follows.
\begin{eqnarray}
 P( S_n \geq nr ) 
   &\leq& e^{-nrt} E( e^{tS_n} )  \nonumber \\
   &\leq& e^{-nrt} e^{npt} E( e^{t( S_n -np)} )  \nonumber \\
   &\leq& e^{-nrt + npt} E(\prod_i e^{t( X_i - p)} )  \nonumber \\
\label{elem4b}
   &\leq& e^{-nrt + npt} \prod_i E( e^{t( X_i - p)} )  
\end{eqnarray}
We note that r.v. $X_i -p$ is bounded and $E(X_i -p)= E(X_i) -p =p-p=0$.
Then,  Proposition~\ref{phoef2} that is being utilized in Hoeffding
bounds can be used to show the following.
\begin{equation}
\label{elem4c}
   E( e^{t( X_i - p)} ) \leq \exp{\left( \frac{t^2}{8} \right)}.  
\end{equation}
Eq.(\ref{elem4b}) by way of Eq.(\ref{elem4c}) yields the following.
\begin{eqnarray}
 P( S_n \geq nr ) 
   &\leq& e^{-nrt + npt} \prod_i E( e^{t( X_i - p)} )  \nonumber\\
   &\leq& e^{-nrt + npt} \prod_i \exp{\left( \frac{t^2}{8} \right)} \nonumber\\
   &\leq& e^{-nrt + npt}  \exp{\left(n \frac{t^2}{8} \right)}  \nonumber\\
\label{elem4d}
   &\leq& e^{-nrt + npt + \frac{n t^2}{8} }. 
\end{eqnarray}
Consider the exponent of Eq.(\ref{elem4d}):
$f(t)= -nrt+ npt + nt^2 /8$. Its first derivative is
$f^{'} (t) = -nr+ np + t/4$. Equating it to zero and solving for
$t$ we have that
$f^{'} (t) = 0 = -nr+ np + t/4$ yields 
\begin{equation}
\label{elem4e}
t = 4(r-p) .
\end{equation}
Given that $f^{''} (t) = 1/4 > 0$, we have a minimum at
$t=4(r-p)$ for $f(t)$.
Therefore Eq.(\ref{elem4d}) by way of Eq.(\ref{elem4e})
yields the following.
\begin{eqnarray}
 P( S_n \geq nr ) 
   &\leq& e^{-nrt + npt + \frac{n t^2}{8} }  \nonumber \\
   &\leq& e^{-nr\cdot 4(r-p) + np\cdot 4(r-p) + 
           \frac{n \cdot 16(r-p)^2}{8} }  \nonumber \\
\label{elem4f}
   &\leq& e^{-2n(r-p)^2 }.
\end{eqnarray}
Eq.(\ref{elem4f}) is Eq.(\ref{elem4}) as needed.
\end{proof}

The following  corollary is more widely known
than Lemma~\ref{lem4}.

\begin{cor}
\label{cor4}
Let $X_i$ be independent Bernoulli  random variables  following
the same distribution as random variable $X$, $X \sim b(p)$,
where $0<p<1$.
Let $S_n = \sum_i X_i$.
Then, for any $r$ such that $p < r < 1$ we have the following.
\begin{equation}
\label{ecor4}
 P( S_n - E(S_n) \geq rn ) \leq 
                  \exp{\left( -2n r^2 \right) }.
\end{equation}
\end{cor}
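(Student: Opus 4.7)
The plan is to recognize that Corollary~\ref{cor4} is essentially a re-parameterization of Lemma~\ref{lem4}, obtained by shifting the threshold by the mean $E(S_n) = np$. Since both statements concern the upper tail of $S_n$ for the same Bernoulli sum, the content is identical once one substitutes the ``deviation from the mean'' variable in place of the absolute threshold.

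Concretely, I would proceed as follows. First, observe that $E(S_n) = np$, so the event in the corollary can be rewritten as
\[
\{ S_n - E(S_n) \geq rn \} = \{ S_n \geq (p+r)n \}.
\]
Next, introduce the auxiliary parameter $r' = p + r$. Since the hypothesis of the corollary gives $r > 0$ (a positive deviation above the mean), we have $r' > p$, which matches the hypothesis $p < r' < 1$ required to invoke Lemma~\ref{lem4} (provided $r < 1-p$, which is the implicit constraint ensuring the event has nontrivial content). Apply Lemma~\ref{lem4} with threshold $r'$ to obtain
\[
P(S_n \geq r' n) \leq \exp\!\left( -2n (r' - p)^2 \right).
\]

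Finally, substitute $r' - p = r$ back in to conclude
\[
P( S_n - E(S_n) \geq rn ) = P( S_n \geq r' n) \leq \exp(-2 n r^2),
\]
which is exactly Eq.~(\ref{ecor4}). There is no real obstacle here beyond the bookkeeping of the shift; the substantive probabilistic work, namely the Chernoff transform followed by the Hoeffding inequality for a centered bounded random variable (Eq.~(\ref{elem4c})), has already been carried out in the proof of Lemma~\ref{lem4}. Thus the corollary follows immediately and requires no new moment generating function computation.
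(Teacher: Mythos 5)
Your proposal is correct and follows essentially the same route as the paper's own proof: substitute $r+p$ for $r$ in Lemma~\ref{lem4}, rewrite the event $\{S_n \geq (r+p)n\}$ as $\{S_n - E(S_n) \geq rn\}$, and read off the bound $\exp(-2n r^2)$. Your additional remark that one implicitly needs $r < 1-p$ so that the shifted threshold satisfies the hypothesis of Lemma~\ref{lem4} is a careful touch the paper glosses over.
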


\begin{proof}
In Lemma~\ref{lem4} substitute $r+p$ for $r$.
Then
$ P( S_n \geq (r+p) n ) = P( S_n \geq rn +pn) 
                        = P( S_n -E(S_n) \geq rn)$
and then substitute $r+p$ for $r$ in Eq.(\ref{elem4})
to obtain Eq.(\ref{ecor4}). 
\end{proof}

The following Corollary can be obtained from Lemma~\ref{lem2}.
It provides a more tangible upper bound than the generic
one of the Lemma.

\begin{cor}
\label{cor2a}
Let $X_i$ be independent random variables  following
the same distribution as random variable  $X$, $X \sim b(p)$,
where $0<p<1$.
Let $S_n = \sum_i X_i$.
Then, for any 
$\delta > 2e-1$.
we have the following,
\begin{equation}
\label{ecor2a}
 P( S_n \geq rn ) =
 P( S_n \geq (1+ \delta) pn ) 
   \leq 2^{-(1+ \delta ) pn}  .
\end{equation}
\end{cor}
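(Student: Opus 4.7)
The plan is to derive Corollary~\ref{cor2a} directly from Lemma~\ref{lem2}, which already gives
\[
P(S_n \geq (1+\delta)pn) \leq \left( \frac{e^{\delta}}{(1+\delta)^{1+\delta}} \right)^{pn}.
\]
So the task reduces to a purely analytic comparison: show that under the hypothesis $\delta > 2e-1$, the base inside the $pn$-th power is bounded above by $2^{-(1+\delta)}$. Once that inequality is in hand, raising both sides to the power $pn$ (both positive, both less than one) yields the claimed bound immediately.

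The key observation is that the threshold $2e-1$ is chosen precisely so that $1+\delta > 2e$, or equivalently $(1+\delta)/e > 2$. I would exploit this as follows. First I would rewrite
\[
\frac{e^{\delta}}{(1+\delta)^{1+\delta}} = \frac{e^{\delta}}{(1+\delta)(1+\delta)^{\delta}},
\]
and then replace $1+\delta$ by the lower bound $2e$ in the denominator (which only makes the whole fraction larger), giving
\[
\frac{e^{\delta}}{(1+\delta)^{1+\delta}} \leq \frac{e^{\delta}}{(2e)^{1+\delta}} = \frac{e^{\delta}}{2e \cdot (2e)^{\delta}} = \frac{1}{2 \cdot 2^{\delta}} = 2^{-(1+\delta)}.
\]
Thus the single monotonicity step $(1+\delta)^{1+\delta} \geq (2e)^{1+\delta}$, valid because $1+\delta > 2e > 1$ and $x \mapsto x^{1+\delta}$ is increasing on the positive reals, carries the entire argument.

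Finally, I would raise to the $pn$-th power:
\[
P(S_n \geq (1+\delta)pn) \leq \left( \frac{e^{\delta}}{(1+\delta)^{1+\delta}} \right)^{pn} \leq \left( 2^{-(1+\delta)} \right)^{pn} = 2^{-(1+\delta)pn},
\]
which is exactly Eq.~(\ref{ecor2a}). There is no real obstacle here beyond identifying why the specific constant $2e-1$ appears: it is the smallest $\delta$ for which $(1+\delta)/e$ exceeds $2$, i.e.\ the exact crossover that lets one convert the natural-exponential Chernoff bound into a base-$2$ bound with no loss in the exponent shape.
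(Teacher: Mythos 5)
Your proposal is correct and follows essentially the same route as the paper: both take the bound of Lemma~\ref{lem2} and replace $1+\delta$ by $2e$ in the denominator, which is exactly what the hypothesis $\delta > 2e-1$ permits. One algebraic slip: your claimed equality $\frac{e^{\delta}}{2e\cdot(2e)^{\delta}} = \frac{1}{2\cdot 2^{\delta}}$ is off by a factor of $e$ --- the exact value is $\frac{1}{2e\cdot 2^{\delta}} = e^{-1}\,2^{-(1+\delta)}$ --- but since that quantity is smaller than $2^{-(1+\delta)}$, the final bound still follows, indeed in a slightly strengthened form. The paper sidesteps this bookkeeping by also enlarging the numerator via $e^{\delta} \leq e^{1+\delta}$, so that the ratio becomes $\frac{e^{1+\delta}}{(2e)^{1+\delta}} = 2^{-(1+\delta)}$ exactly.
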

\begin{proof}
From Lemma~\ref{lem2}, we have
\[
\frac{e^{\delta }}{(1+\delta)^{(1+\delta )}} 
\leq
\frac{e^{1+\delta }}{(2e)^{(1+\delta )}} 
\leq
\frac{e^{1+ \delta }}{(2e)^{1+\delta }} 
\leq 2^{-(1+\delta )}.
\]
The result follows then.
\end{proof}

The following Corollary is also obtained from Lemma~\ref{lem2}.
Note that $\delta > 0$ is better than the $\delta$ used in
Corollary~\ref{cor2c} that follows it.

The following inequality will be used \cite{AS}, \cite{ES}, \cite{NIST}.

\begin{lem}
\label{ineq1}
For every $ x  > -1$ and $x\neq 0$  we have the following.
\[
\frac{x}{1+x} < \ln{(1+x)}  < x.
\]
For every $x>0$ we have the following.
\[
 \ln{(x)} \leq x-1.
\]
For every $ x  < 1$ and $x\neq 0$  we have the following.
\[
x < - \ln{(1-x)} < \frac{x}{1-x} .
\]
\end{lem}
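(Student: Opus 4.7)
(Proof plan for Lemma~\ref{ineq1})
My plan is to prove the first chain of inequalities by a standard calculus argument, and then obtain the other two as immediate consequences by simple substitutions, so that no further work is needed.

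For the first inequality, I would introduce the two auxiliary functions
\[
g(x) = x - \ln(1+x), \qquad h(x) = \ln(1+x) - \frac{x}{1+x},
\]
both defined on $(-1,\infty)$. I would compute $g'(x) = 1 - 1/(1+x) = x/(1+x)$, which is negative on $(-1,0)$ and positive on $(0,\infty)$, so $g$ attains its unique minimum $g(0)=0$ at $x=0$. This gives $g(x) > 0$ for $x \neq 0$, i.e.\ $\ln(1+x) < x$. A similar computation gives $h'(x) = 1/(1+x) - 1/(1+x)^2 = x/(1+x)^2$, with the same sign pattern, so $h$ attains its unique minimum $h(0)=0$ at $x=0$, which yields $\ln(1+x) > x/(1+x)$ for $x \neq 0$. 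Combining the two gives the first claim.

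For the second inequality, I would simply substitute $x \mapsto x-1$ (for $x > 0$, so $x-1 > -1$) into the right half of the first inequality, obtaining $\ln(x) \le x - 1$, with equality now allowed at $x=1$ because the stated inequality is non-strict. For the third inequality, I would substitute $x \mapsto -x$ (for $x < 1$, $x \neq 0$, which corresponds to $-x > -1$, $-x \neq 0$) into the first inequality, yielding $-x/(1-x) < \ln(1-x) < -x$, and then multiply through by $-1$, reversing the inequalities, to obtain $x < -\ln(1-x) < x/(1-x)$.

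The only subtle point I anticipate is making sure the inequalities are strict at all allowed points, which is handled cleanly by the fact that each auxiliary function has a strict global minimum at $x=0$, as the sign of the derivative changes from negative to positive there. No real obstacle is expected; the lemma is a standard calculus exercise and the three parts collapse essentially to one.
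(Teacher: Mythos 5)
Your proof is correct, but it is worth noting that the paper does not prove this lemma at all: it is stated as a known fact with citations to \cite{AS}, \cite{ES}, and \cite{NIST}, so there is no internal proof to compare against. Your self-contained calculus argument is sound: the auxiliary functions $g(x)=x-\ln(1+x)$ and $h(x)=\ln(1+x)-\frac{x}{1+x}$ have derivatives $g'(x)=\frac{x}{1+x}$ and $h'(x)=\frac{x}{(1+x)^2}$, both negative on $(-1,0)$ and positive on $(0,\infty)$, so each has a strict global minimum equal to $0$ at $x=0$, which gives both strict inequalities of the first display for $x\neq 0$. The reductions of the remaining two displays are also handled correctly: the substitution $x\mapsto x-1$ gives $\ln(x)<x-1$ for $x>0$, $x\neq 1$, with equality at $x=1$ absorbed by the non-strict statement, and the substitution $x\mapsto -x$ (valid since $x<1$, $x\neq 0$ corresponds exactly to $-x>-1$, $-x\neq 0$) followed by negation yields $x<-\ln(1-x)<\frac{x}{1-x}$. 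What the paper's citation-only approach buys is brevity in a survey whose emphasis lies elsewhere; what your approach buys is self-containment, making the downstream results that depend on this lemma (such as Corollary~\ref{cor2b}) verifiable without consulting external references.
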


\begin{cor}
\label{cor2b}
Let $X_i$ be independent random variables  following
the same distribution as random variable $X$, $X \sim b(p)$,
where $0<p<1$.
Let $S_n = \sum_i X_i$.
Then, for any  $\delta > 0$ we have the following.
\begin{equation}
\label{ecor2b}
 P( S_n \geq (1+ \delta) pn ) 
 \leq \exp{\left( \frac{-\delta^2}{2+\delta} \cdot pn  \right)} .
\end{equation}
\end{cor}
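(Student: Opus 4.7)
The plan is to start from Lemma~\ref{lem2}, which already gives
\[
P(S_n \geq (1+\delta)pn) \leq \left( \frac{e^{\delta}}{(1+\delta)^{1+\delta}} \right)^{pn},
\]
so the whole game reduces to an analytic inequality on the exponent: showing that
\[
\delta - (1+\delta)\ln(1+\delta) \leq -\frac{\delta^2}{2+\delta} \qquad \text{for all } \delta > 0,
\]
or equivalently $(1+\delta)\ln(1+\delta) - \delta \geq \delta^2/(2+\delta)$. Once this scalar inequality is in hand, multiplying by $pn$ and exponentiating immediately yields Eq.~(\ref{ecor2b}).

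The cleanest route to this inequality is via the auxiliary bound
\[
\ln(1+\delta) \geq \frac{2\delta}{2+\delta} \qquad (\delta \geq 0),
\]
which strengthens the left half of Lemma~\ref{ineq1}. Substituting this estimate gives
\[
(1+\delta)\ln(1+\delta) - \delta \;\geq\; (1+\delta)\cdot \frac{2\delta}{2+\delta} - \delta \;=\; \frac{2\delta(1+\delta) - \delta(2+\delta)}{2+\delta} \;=\; \frac{\delta^2}{2+\delta},
\]
which is exactly what we need. So the workflow is: first quote Lemma~\ref{lem2}; second, state the target reduction in log form; third, establish the Padé-type lower bound on $\ln(1+\delta)$; fourth, carry out the one-line algebraic simplification above; fifth, exponentiate.

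The main obstacle is establishing $\ln(1+\delta) \geq 2\delta/(2+\delta)$, since Lemma~\ref{ineq1} only supplies the weaker $\ln(1+\delta) \geq \delta/(1+\delta)$. I would prove it by setting $g(\delta) = \ln(1+\delta) - 2\delta/(2+\delta)$ and checking that $g(0)=0$ together with
\[
g'(\delta) \;=\; \frac{1}{1+\delta} - \frac{2(2+\delta) - 2\delta}{(2+\delta)^2} \;=\; \frac{1}{1+\delta} - \frac{4}{(2+\delta)^2} \;=\; \frac{(2+\delta)^2 - 4(1+\delta)}{(1+\delta)(2+\delta)^2} \;=\; \frac{\delta^2}{(1+\delta)(2+\delta)^2} \;\geq\; 0,
\]
so $g$ is non-decreasing on $[0,\infty)$ and hence $g(\delta) \geq 0$. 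Everything else is a routine derivative check, and no new probabilistic input beyond Lemma~\ref{lem2} is required.
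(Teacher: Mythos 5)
Your proof is correct and follows essentially the same route as the paper's: quote Lemma~\ref{lem2}, reduce to the scalar inequality $(1+\delta)\ln(1+\delta)-\delta \geq \delta^2/(2+\delta)$, invoke the Pad\'e-type bound $\ln(1+\delta)\geq 2\delta/(2+\delta)$, and finish with the same one-line algebraic simplification. The one place where you diverge works to your credit: the paper justifies the bound $\ln(1+\delta)\geq 2\delta/(2+\delta)$ by citing Lemma~\ref{ineq1}, but that lemma only supplies the strictly weaker estimate $\ln(1+\delta) > \delta/(1+\delta)$ (indeed $2\delta/(2+\delta) > \delta/(1+\delta)$ for all $\delta>0$), so the paper's citation does not actually cover the inequality it uses. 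You identified precisely this gap and closed it with the monotonicity argument: $g(\delta)=\ln(1+\delta)-2\delta/(2+\delta)$ satisfies $g(0)=0$ and $g'(\delta)=\delta^2/\bigl((1+\delta)(2+\delta)^2\bigr)\geq 0$, hence $g\geq 0$ on $[0,\infty)$; the computation is correct. Your write-up is therefore not only valid but more self-contained than the paper's own proof.
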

\begin{proof}
We would like to upper bound the bound of Eq.(\ref{rche2}) 
of  Lemma~\ref{lem2}, as follows.
\[
\frac{e^{\delta }}{(1+\delta)^{(1+\delta )}} 
\leq \exp{ ( \frac{-\delta^2}{2+\delta} )}
\]
Consider function
$f(t)$ defined as follows.
\[
f(t) = \delta - (1+ \delta ) \ln{(1+ \delta)} + \frac{\delta^2}{2+\delta }.
\]
By Lemma~\ref{ineq1}
\[
\ln{(1+ \delta)} \geq \frac{2\delta}{2+\delta }.
\]
Therefore we have the following result
\[
f(t) \leq \delta - (1+ \delta ) \ln{(1+ \delta)} + 
\frac{\delta^2}{2+\delta }
\leq \frac{\delta (2 + \delta) -(1+\delta )(2 \delta) + \delta^2}
          {2+\delta } =0
\]
The Corollary then follows.
\end{proof}

A small improvement has been proposed by McDiarmid in the
following form.

\begin{cor}[\cite{McD2}]
\label{cor2bb}
Let $X_i$ be independent random variables  following
the same distribution as random variable $X$, $X \sim b(p)$,
where $0<p<1$.
Let $S_n = \sum_i X_i$.
Then, for any  $\delta > 0$ we have the following.
\begin{equation}
\label{ecor2bb}
 P( S_n \geq (1+ \delta) pn ) 
 \leq \exp{\left( \frac{-\delta^2}{2+2\cdot \delta / 3} \cdot pn  \right)} .
\end{equation}
\end{cor}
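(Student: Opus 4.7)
The plan is to parallel the proof of Corollary~\ref{cor2b}. Starting from Lemma~\ref{lem2}, we have
\[
P(S_n \geq (1+\delta)pn) \leq \left(\frac{e^\delta}{(1+\delta)^{1+\delta}}\right)^{pn},
\]
so it suffices to establish the pointwise inequality
\[
\frac{e^\delta}{(1+\delta)^{1+\delta}} \leq \exp\left(\frac{-\delta^2}{2+2\delta/3}\right)
\]
for all $\delta > 0$. Taking logarithms, this reduces to proving
\[
g(\delta) := (1+\delta)\ln(1+\delta) - \delta - \frac{3\delta^2}{6+2\delta} \geq 0, \qquad \delta > 0.
\]

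The main obstacle is that the lower bound $\ln(1+\delta) \geq \frac{2\delta}{2+\delta}$ drawn from Lemma~\ref{ineq1}, which powered the proof of Corollary~\ref{cor2b}, is too loose to yield the sharper denominator $2+2\delta/3$. Rather than hunt for a sharper one-step elementary inequality, we would proceed by direct calculus on $g$. A short computation shows $g(0) = 0$ and $g'(0) = 0$: differentiating $(1+\delta)\ln(1+\delta) - \delta$ yields $\ln(1+\delta)$, and differentiating $\frac{3\delta^2}{6+2\delta}$ yields $\frac{3\delta(6+\delta)}{2(3+\delta)^2}$, both of which vanish at $\delta = 0$.

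Next we compute $g''(\delta)$. After cancellation, the derivative of $\frac{3\delta(6+\delta)}{2(3+\delta)^2}$ collapses cleanly to $\frac{27}{(3+\delta)^3}$, giving
\[
g''(\delta) = \frac{1}{1+\delta} - \frac{27}{(3+\delta)^3}.
\]
Nonnegativity of $g''$ is therefore equivalent to the polynomial inequality $(3+\delta)^3 \geq 27(1+\delta)$, which we dispatch in a few lines: the function $h(\delta) = (3+\delta)^3 - 27(1+\delta)$ satisfies $h(0) = 0$, $h'(0) = 0$, and $h''(\delta) = 6(3+\delta) > 0$ on $[0,\infty)$, so $h' \geq 0$ and hence $h \geq 0$. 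Combining $g'' \geq 0$ with $g(0) = g'(0) = 0$, it follows successively that $g' \geq 0$ and then $g \geq 0$, which yields the desired bound on $\frac{e^\delta}{(1+\delta)^{1+\delta}}$ and hence the corollary.
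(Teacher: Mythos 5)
Your proof is correct, and its skeleton coincides with the paper's: both reduce the corollary, via Lemma~\ref{lem2}, to the scalar inequality $(1+x)\ln(1+x) - x \geq \frac{x^2}{2+(2/3)x}$ for $x>0$. The difference is that the paper's proof stops there --- it merely asserts this inequality (attributing it to McDiarmid) and never proves it --- whereas you supply a complete derivation. You correctly note that the bound $\ln(1+\delta) \geq \frac{2\delta}{2+\delta}$ from Lemma~\ref{ineq1}, which powered Corollary~\ref{cor2b}, only yields the weaker denominator $2+\delta$, so a new argument is genuinely needed; your second-derivative computation checks out: $g(0)=g'(0)=0$ and
\[
g''(\delta) = \frac{1}{1+\delta} - \frac{27}{(3+\delta)^3},
\]
where the derivative of $\frac{3\delta(6+\delta)}{2(3+\delta)^2}$ does collapse to $\frac{27}{(3+\delta)^3}$, and nonnegativity of $g''$ reduces to $(3+\delta)^3 \geq 27(1+\delta)$. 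Your argument is therefore strictly more complete than the paper's one-line citation. One small simplification is available at the end: $(3+\delta)^3 - 27(1+\delta) = \delta^3 + 9\delta^2 \geq 0$ follows by direct expansion, so the auxiliary function $h$ and its two derivatives are not needed.
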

\begin{proof}
The proof utilizes the following inequality for all $x>0$.
\[
(1+x) \ln{(1+x)} -x \geq \frac{x^2}{2+(2/3)x}
\]
\end{proof}

The following
Corollary can also be obtained from Lemma~\ref{lem2}.
It is similar to the one in \cite{AV} for the binomial case.

The following inequality will be used.
\begin{lem}
\label{ineq7}
For every $ 0 \leq    \delta \leq 1$ we have the following.
\[
(1+ \delta ) \ln{(1+ \delta )} \geq   \delta + {\delta}^2 /3 .
\]
\end{lem}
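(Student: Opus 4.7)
The plan is to rearrange the inequality into a single nonnegative function and verify it by standard one-variable calculus. Set
$$ g(\delta) \;=\; (1+\delta)\ln(1+\delta) \;-\; \delta \;-\; \frac{\delta^2}{3}, $$
so that the claim is equivalent to $g(\delta) \geq 0$ for $0 \leq \delta \leq 1$. The first step is to check the boundary value $g(0)=0$, which reduces the problem to proving $g'(\delta) \geq 0$ throughout $[0,1]$.

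Differentiating once yields $g'(\delta) = \ln(1+\delta) - 2\delta/3$, and we also have $g'(0)=0$. The natural next step is to sign $g''$ and conclude monotonicity of $g'$; here, however, we run into the main (mild) obstacle of the proof: $g''(\delta) = 1/(1+\delta) - 2/3$ is positive for $\delta < 1/2$ and negative for $\delta > 1/2$, so $g'$ is not monotone on the whole interval $[0,1]$. A one-line derivative chain does not suffice.

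To get around this, I would exploit the single sign change of $g''$ to describe the shape of $g'$ directly. On $[0, 1/2]$ the function $g'$ is increasing from $g'(0)=0$, hence $g'(\delta) \geq 0$ on this subinterval. On $[1/2, 1]$ the function $g'$ is decreasing, so its minimum on that subinterval is attained at the right endpoint, where $g'(1) = \ln 2 - 2/3 > 0$ (since $\ln 2 > 0.693 > 2/3$). Therefore $g'(\delta) \geq 0$ on all of $[0,1]$, and integrating from $0$ to $\delta$ (or equivalently applying the mean value theorem) gives $g(\delta) \geq g(0) = 0$, which is exactly the stated inequality.

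As a sanity check and alternative route, one could instead expand $(1+\delta)\ln(1+\delta)$ via the Taylor series of $\ln(1+\delta)$ around $0$, obtaining $\delta + \delta^2/2 - \delta^3/6 + \delta^4/12 - \cdots$, and then show that the tail $\delta^2/6 - \delta^3/6 + \delta^4/12 - \cdots$ is nonnegative for $\delta \in [0,1]$. That approach requires pairing alternating terms carefully, so the derivative argument above is the cleaner presentation.
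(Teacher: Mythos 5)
Your proof is correct, but it takes a genuinely different route from the paper's. The paper proves this lemma by power series: it adds the expansions of $\ln(1+\delta)$ and $\delta\ln(1+\delta)$ term by term to get $(1+\delta)\ln(1+\delta) \geq \delta + \delta^2/2 - \delta^3/6$, and then uses $\delta^3 \leq \delta^2$ on $[0,1]$ to conclude $\delta + \delta^2/2 - \delta^3/6 \geq \delta + \delta^2/3$. Your argument is instead the two-derivative analysis of $g(\delta) = (1+\delta)\ln(1+\delta) - \delta - \delta^2/3$: from $g(0)=g'(0)=0$ and the single sign change of $g''$ at $\delta = 1/2$, the function $g'$ is increasing then decreasing, so its nonnegativity on $[0,1]$ follows from the two checks $g'(0)=0$ and $g'(1)=\ln 2 - 2/3 > 0$, and integration gives $g \geq 0$. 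Both are sound, and your handling of the non-monotonicity of $g'$ (splitting at $1/2$ and checking the right endpoint) is exactly right. What the series approach buys is the stronger intermediate bound $\delta + \delta^2/2 - \delta^3/6$, making visible that the constant $1/3$ is lossy only through the crude step $\delta^3 \leq \delta^2$; its cost is that one must justify that the discarded alternating tail of the series is nonnegative, a point the paper passes over quickly and that you correctly flag as the delicate part of that route. What your approach buys is self-containedness: no series manipulation, just elementary calculus plus the single numerical fact $\ln 2 > 2/3$. It is worth noting that your argument is essentially the mirror image of the ``more tedious approach'' the paper itself deploys later, in the proof of Corollary~\ref{cor2c}, to extend the range of the derived tail bound to $\delta \leq 1$ (there the paper studies $f(t) = t - (1+t)\ln(1+t) + t^2/3 = -g(t)$ with the same second-derivative sign analysis); so your technique is one the paper endorses, just not in its proof of this particular lemma.
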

\begin{proof}
For $ 0 \leq     x \leq 1$,
\begin{eqnarray*}
 \ln{(1+ x )}      &=& x -x^2 /2 + x^3 /3 - x^4 /4 + \ldots \\
(1+x) \ln{(1+ x )} &=& x -x^2 /2 + x^3 /3 - x^4 /4 + \ldots \\
                   &+& x^2 - x^3 /2 + x^4 /3 - x^5 /4 + \ldots \\
(1+x) \ln{(1+ x )} &\geq& x + x^2 /2 - x^3 /6
\end{eqnarray*}
We can then conclude since $0 \leq  x \leq 1$ that
\[
(1+x) \ln{(1+ x )} \geq x + x^2 /2 - x^3 /6  \geq x +x^2 /2 - x^2 /6
 \geq x + x^2 /3 .
\]
Set $x=\delta$ and the result follows.
\end{proof}

\begin{cor}
\label{cor2c}
Let $X_i$ be independent random variables  following
the same distribution as random variable  $X$, $X \sim b(p)$,
where $0<p<1$.
Let $S_n = \sum_i X_i$.
Then,  for any $\delta$ such that 
$0 < \delta < 1$ we have the following.
\begin{equation}
\label{ecor2c}
 P( S_n   \geq rn ) =
 P( S_n  \geq (1+ \delta ) pn ) 
 \leq  \exp{\left( \frac{-\delta^2}{3} \cdot pn  \right)} .
\end{equation}
\end{cor}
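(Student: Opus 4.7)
The plan is to bound the quantity in Lemma~\ref{lem2} using the inequality from Lemma~\ref{ineq7}. Starting from
\[
P(S_n \geq (1+\delta)pn) \leq \left(\frac{e^{\delta}}{(1+\delta)^{(1+\delta)}}\right)^{pn},
\]
I would take logarithms and reduce the claim to showing
\[
\delta - (1+\delta)\ln(1+\delta) \leq -\frac{\delta^2}{3}
\qquad\text{for } 0 < \delta < 1,
\]
which after rearrangement is exactly
\[
(1+\delta)\ln(1+\delta) \geq \delta + \frac{\delta^2}{3}.
\]

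This is the content of Lemma~\ref{ineq7}, which has already been established in the excerpt via the Taylor expansion of $\ln(1+x)$. So the bulk of the work has been done; all that remains in the corollary's proof is to apply Lemma~\ref{ineq7}, exponentiate, and raise both sides to the $pn$-th power.

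Concretely, I would write three short steps: (i) restate the bound from Lemma~\ref{lem2} in the form $\exp\bigl((\delta - (1+\delta)\ln(1+\delta))\cdot pn\bigr)$; (ii) invoke Lemma~\ref{ineq7} to replace $(1+\delta)\ln(1+\delta)$ by $\delta + \delta^2/3$, which cancels the $\delta$ in the exponent and leaves $-\delta^2/3$; and (iii) conclude by multiplying the exponent by $pn$, yielding Eq.~(\ref{ecor2c}).

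There is no serious obstacle here: the analytic content is entirely in Lemma~\ref{ineq7}, and the corollary is just a packaging step. The only mild care needed is to make sure the range hypothesis $0 < \delta < 1$ is used exactly where Lemma~\ref{ineq7} requires it (the step $-x^3/6 \geq -x^2/6$ in the Taylor comparison), and to note that this is precisely why the bound here is tighter than in Corollary~\ref{cor2b} but restricted to the small-deviation regime $\delta < 1$.
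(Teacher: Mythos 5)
Your proposal is correct and follows essentially the same route as the paper: both start from the bound of Lemma~\ref{lem2} and apply Lemma~\ref{ineq7} to obtain $(1+\delta)^{1+\delta} \geq e^{\delta + \delta^2/3}$, so that $e^{\delta}/(1+\delta)^{1+\delta} \leq e^{-\delta^2/3}$, and then raise to the power $pn$. The paper additionally sketches a more tedious derivative-based argument to handle the endpoint $\delta = 1$, but its core proof is exactly your three-step packaging of Lemma~\ref{ineq7}.
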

\begin{proof}
By Lemma~(\ref{ineq7})
we have for $0 < \delta < 1$,
$(1+\delta) \ln{(1+ \delta )}  \geq \delta + \delta^2 /3$.
This results to
\[
\frac{e^{\delta }}{(1+\delta)^{(1+\delta )}} 
\leq
\frac{e^{\delta }}{e^{\delta + \delta^2 /3}} 
\leq
e^{ - \frac{\delta^2 }{3} }.
\]
The result follows.
Note that as proved, $\delta <1$. However, we can 
improve the upper bound $\delta \leq 1$ by a more
tedious approach. We show it below.
Consider as before in Corollary~\ref{cor2b},
function $f(t)$ ($t$ substitutes for $\delta$) 
defined as follows.
\[
f(t) = t - (1+ t ) \ln{(1+ t)} + \frac{t^2}{3}.
\]
We would like to show $f(t) \geq 0$.
We first calculate its first derivative.
\[
f^{'} (t) = 2t/3 - \ln{(1+t)}.
\]
We note that $f^{'} (0) = 0$ and 
             $f^{'} (1) < 0$.
In order to study the monotonicity of
$f^{'} (t)$ we go on calculating the second derivative.
\[
f^{''} (t) = 2/3 - 1/(1+t).
\]
We note that $f^{''} (0) < 0$ for $t \leq 1/2$
and          $f^{''} (1) > 0$ for $t > 1/2$.
This means that $f^{'} (t)$ is monotonically
decreasing for $t \leq 1/2$ and given $f^{'} (0) = 0$,
negative for $t \leq 1/2$,
and monotonically increasing and since 
$f^{'} (1) < 0$ also negative for $1> t >    1/2$.
One can also separately confirm that $f(1/2) < 0 $.
Thus $f(\delta ) \leq 0$ for all $0< \delta \leq 1$.
The $\delta > 0$ is needed since $r > p$.
\end{proof}

We report below a corollary variant of Lemma~\ref{lem1}.
This is Corollary~\ref{cor2e}.

\begin{cor}
\label{cor2e}
Let $X_i$ be independent random variables  following
the same distribution as random variable $X$, $X \sim b(p)$,
where $0<p<1$.
Let $S_n = \sum_i X_i$.
Then, for any $t$ such that $0 < t < 1-p$ we have the following.
\begin{equation}
\label{rche1alt}
 P( S_n \geq (p+t)n ) \leq \exp{( - D((p+t) || p) n)}
                = \left[ \left( \frac{p}{p+t}     \right)^{p+t}
                         \left( \frac{1-p}{1-p-t} \right)^{1-p-t}
                  \right]^n .
\end{equation}
\end{cor}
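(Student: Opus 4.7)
The plan is to obtain this as an immediate consequence of Lemma~\ref{lem1} by the change of variable $r = p + t$, so essentially no new analytical work is required. First I would verify that the hypothesis of Lemma~\ref{lem1}, namely $p < r < 1$, translates exactly to the stated range $0 < t < 1-p$: subtracting $p$ throughout $p < p+t < 1$ yields $0 < t < 1-p$, which matches the corollary's assumption. The independence and Bernoulli assumptions on the $X_i$ are identical to those in Lemma~\ref{lem1}, so nothing needs to be re-derived there.

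Next, I would apply Lemma~\ref{lem1} with $r = p+t$ to assert
\[
P(S_n \geq (p+t) n) = P(S_n \geq rn) \leq \exp(-D(r \| p)\, n).
\]
Then I would replace $r$ by $p+t$ and $1-r$ by $1-p-t$ in the closed form $\left[(p/r)^r ((1-p)/(1-r))^{1-r}\right]^n$ from Eq.~(\ref{rche1}) to obtain the desired expression
\[
\left[\left(\frac{p}{p+t}\right)^{p+t} \left(\frac{1-p}{1-p-t}\right)^{1-p-t}\right]^n.
\]
The equality with $\exp(-D((p+t)\|p)\, n)$ follows from the definition of the Kullback–Leibler divergence applied to the two-point distributions $(p+t, 1-p-t)$ and $(p, 1-p)$, which is already implicit in Lemma~\ref{lem1}.

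There is really no significant obstacle here; the only care needed is bookkeeping, namely making sure that the interval constraint is translated correctly and that both factors in the product are rewritten consistently under the substitution. Since Lemma~\ref{lem1} is stated earlier and fully proved, the corollary reduces to a one-line appeal followed by an algebraic relabeling, so the proposed proof will be a short paragraph quoting Lemma~\ref{lem1} with $r \leftarrow p+t$ and simplifying.
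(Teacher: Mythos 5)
Your proposal is correct and matches the paper's approach exactly: the paper itself introduces Corollary~\ref{cor2e} as the variant of Lemma~\ref{lem1} obtained by setting $r = p+t$, noting that the condition $p < r < 1$ becomes $0 < t < 1-p$, which is precisely your substitution and range check. No additional argument is needed beyond this relabeling, so your one-paragraph reduction to Lemma~\ref{lem1} is the intended proof.
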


\section{Derived left tails}

We proceed to deriving a Lemma identical to
Lemma~\ref{lem1} for the left tails. This is 
stated next.

\begin{lem}
\label{lem5}
Let $X_i$ be independent random variables  following
the same distribution as random variable $X$, $X \sim b(p)$,
where $0<p<1$.
Let $S_n = \sum_i X_i$.
Then, for any $r$ such that $0 < r < p$ we have the following.
\begin{equation}
\label{lche4}
 P( S_n \leq rn ) \leq \exp{( - D(r || p) n)}
                = \left[ \left( \frac{p}{r}     \right)^r
                         \left( \frac{1-p}{1-r} \right)^{1-r}
                  \right]^n .
\end{equation}
\end{lem}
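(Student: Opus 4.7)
The plan is to mirror the proof of Lemma~\ref{lem1} but start from the left-tail version Eq.~(\ref{chel1}) of Theorem~\ref{chthm1} instead of Eq.~(\ref{cher1}). That is, repeating for $t<0$ the steps used in the proof of Theorem~\ref{chthm1} that produced Eq.~(\ref{che3}), I would obtain
\[
P( S_n \leq nr ) \leq \inf_{t<0} \left( e^{-rt} M_X(t) \right)^n,
\]
and then substitute the Bernoulli moment generating function $M_X(t) = 1 + p(e^t - 1)$ to reach the same objective
\[
f(t) = \ln(1+p(e^t-1)) - rt
\]
that appeared in the proof of Lemma~\ref{lem1}; the only difference is that the optimization is now over $t<0$.

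The calculations of $f'(t) = pe^t/(1+pe^t - p) - r$ and $f''(t) = p(1-p)e^t/(1+pe^t-p)^2 > 0$ are identical, so the unique stationary point still satisfies $e^t = (1-p)r/(p(1-r))$. The one genuinely new step, and in my view the only point requiring attention, is verifying that this stationary point lies in the correct half-line: since $0 < r < p < 1$ we have $(1-p)r < (1-r)p$, hence $(1-p)r/(p(1-r)) < 1$, so $t = \ln\!\bigl((1-p)r/(p(1-r))\bigr) < 0$, matching the left-tail regime. Strict convexity of $f$ then certifies that this is the global minimum on $(-\infty,0)$.

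Once the optimizer is in hand, the remaining algebra is exactly that of Eq.~(\ref{rche1d})--Eq.~(\ref{rche1f}) in the proof of Lemma~\ref{lem1}: the numerator $1+p(e^t-1)$ reduces to $(1-p)/(1-r)$, the denominator $e^{rt}$ becomes $((1-p)r/(p(1-r)))^{r}$, and combining these yields
\[
\inf_{t<0} \left( e^{-rt} (1+p(e^t-1)) \right)^n = \left[ \left(\frac{p}{r}\right)^r \left(\frac{1-p}{1-r}\right)^{1-r} \right]^n = \exp(-n D(r || p)),
\]
which is Eq.~(\ref{lche4}). No real obstacle is anticipated; the entire argument is parallel to Lemma~\ref{lem1}, with the sole substantive distinction being the sign check on $t$, which is immediate from $r<p$.
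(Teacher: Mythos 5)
Your proposal is correct and is essentially the paper's own second method: the paper also optimizes $e^{-rt}M_X(t)$ at the stationary point $e^{-t}=(1-p)r/(p(1-r))$, merely writing it with $t>0$ and $e^{-t}$ where you use $t<0$ and $e^{t}$, a pure change of sign convention. (The paper additionally gives a shorter alternative via symmetry, reducing the left tail of successes to the right tail of the failures $Y_n = n - S_n$ and invoking Lemma~\ref{lem1} with $D((1-r)\,||\,(1-p)) = D(r\,||\,p)$, but your route matches its Method 2.)
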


Sometimes $r=p-t$, and the bound $0<r<p$ 
becomes $0<p-t <p$ or equivalently $0 < t < p$. 

\begin{proof}(Of Eq.~(\ref{lche4}))
$ $ \\ $ $
{\bf Method 1.}
A simple argument works as follows: the upper bound
on the number of successes generates a corresponding
lower bound on the number of failures. Thus
for $F_i = 1 - X_i $, we have $\sum F_i = n - \sum_i X_i$
or equivalently $Y_n = n - S_n$. Therefore
\[
P( S_n \leq rn ) 
       = P( n- S_n \geq n(1-r) ) 
       = P( Y_n \geq n(1-r) )
\]
The latter bound by Lemma~\ref{lem1} is 
bounded above by Eq.(\ref{rche1}) adjusting
it with $r$ replaced by the $1-r$
of $Y_n \geq n(1-r)$ and $p$ by $1-p$ to 
account for the failures not the successes of random
variable $Y_n$. The end result is that
\[
 P( S_n \leq rn ) \leq \exp{( - D((1-r) || (1-p)) n)} .
\]
However
$ D((1-r) || (1-p)) =   D(r || p) $ and therefore
\[
 P( S_n \leq rn ) \leq \exp{( - D((1-r) || (1-p)) n)} 
 = D(r || p) .
\]
\bigskip
$ $ \\ $ $
{\bf Method 2.}
Now let us reprove it following the method
of the proof of Lemma~\ref{lem1}.
Consider $t>0$, and apply the Chernoff trick and
finally use Markov's inequality as we did earlier.
\begin{eqnarray}
 P( S_n \leq nr ) &=&    P(  t S_n   \leq  t nr)  \nonumber \\
                  &=&    P( -t S_n   \geq -t nr)  \nonumber \\
                 &=&    P( e^{-tS_n} \geq e^{-nrt} )\nonumber  \\
                 &\leq& \frac{ E( e^{tS_n} )}{e^{nrt}} \nonumber \\
                 &\leq& e^{nrt} E( e^{-tS_n} ) \nonumber \\
                 &\leq& e^{nrt} (E( e^{-Xt} ))^n 
\label{che4a}
\end{eqnarray}
We calculate
 $ E( e^{-tX} ) 
         = 1 + p( e^{-t} -1 ) \leq e^{pe^{-t} -p}$.
The rest of the calculation are similarly to the ones before
\begin{eqnarray}
 P( S_n \leq nr ) 
   &\leq& \inf_{t>0} \left( e^{rt}   (1 + p( e^{-t} -1 )) \right)^n 
      \nonumber \\
\label{che4b}
   &\leq& \inf_{t>0} \left(  \exp{(f(t))} \right)^n  .
\end{eqnarray}
As indicated by Eq.~(\ref{che4b}),
$f(t) = \ln{(1 + p( e^{-t} -1 ))} +rt$.
Consider $f^{'} (t) = -pe^{-t} / (1+p^{-t} -p) +r$.
Equating to zero $f^{'} (t)=0$ and solving for $t$
we obtain $e^{-t} = (1-p)r/(p(1-r))$.
Continuing with the second derivative we find
$f^{''} (t) = p(1-p) e^{-t} / (1+pe^{-t} -p)^2 >0$.
Therefore $f(t)$ has a minimum for $ e^{-t} = (1-p)r/(p(1-r))$.
We then continue with Eq.~(\ref{che4b}) as follows. 
\begin{eqnarray}
 P( S_n \leq nr ) 
\label{che4c}
   &\leq& \inf_{t>0} \left( e^{rt} (1 + p( e^{-t} -1 )) \right)^n 
\end{eqnarray}
The term $\exp{(rt)}$ for $e^{-t} = (1-p)r/(p(1-r))$
is as follows.
\begin{equation}
\label{che4d}
\exp{(rt)} = \frac{p^r (1-r)^r}{(1-p)^r r^r}.
\end{equation}
Likewise the other term is as follows.
\begin{equation}
\label{che4e}
(1 + p( e^{-t} -1 )) = \frac{p(1-r)-p^2 (1-r) +p(1-p)r}{p(1-r)}
 = \frac{1-p}{1-r}
\end{equation}
Therefore we have the following for the quatinty below.
\begin{equation}
\label{che4f}
e^{rt} \cdot (1 + p( e^{-t} -1 ))
= \frac{1-p}{1-r} \cdot  \frac{p^r (1-r)^r}{(1-p)^r r^r}
                = \left[ \left( \frac{p}{r}     \right)^r
                         \left( \frac{1-p}{1-r} \right)^{1-r}
                  \right]^n .
\end{equation}
\end{proof}

We generate one more bound below.

\begin{lem}
\label{lem6}
Let $X_i$ be independent random variables  following
the same distribution as random variable $X$, $X \sim b(p)$,
where $0<p<1$.
Let $S_n = \sum_i X_i$.
Then, 
for any $r$ such that $0 < r < p$ or, equivalently,
for any $\delta$ such that  $0< \delta < 1$ we have
the following.
\begin{equation}
\label{lche5}
 P( S_n \leq rn ) =
 P( S_n \leq (1- \delta) pn ) 
   \leq            \left(  \frac{e^r \cdot p^r}{e^p \cdot r^r} \right)^n 
     =             \left(  \frac{e^{-\delta }}
                                  {(1-\delta)^{(1-\delta )}} 
                     \right)^{pn} 
\end{equation}
\end{lem}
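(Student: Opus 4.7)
The plan is to mirror the proof of Lemma~\ref{lem2} step by step, swapping the Chernoff transform $e^{tS_n}$ for $e^{-tS_n}$ with $t>0$ so that the manipulation captures a left-tail event. First I would invoke the left-tail version of the Chernoff inequality (the same derivation that appears in Method~2 of the proof of Lemma~\ref{lem5}, carrying it through to Eq.~(\ref{che4a})), which yields
\[
P(S_n \leq nr) \leq e^{nrt}\bigl(E(e^{-tX})\bigr)^n \quad \text{for any } t>0.
\]
Then I would replace the Bernoulli moment-generating function $E(e^{-tX}) = 1 + p(e^{-t}-1)$ by its exponential upper bound $e^{pe^{-t}-p}$ (using $1+x\leq e^x$, just as in Lemma~\ref{lem2}), reducing the problem to a one-variable optimization: minimize $f(t) = pe^{-t} - p + rt$ over $t>0$.

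The core calculation is exactly parallel to the one in Lemma~\ref{lem2}. Setting $f'(t) = -pe^{-t} + r = 0$ gives the critical point $t^{\star} = \ln(p/r)$, which lies in $(0,\infty)$ precisely because the hypothesis $0<r<p$ makes $p/r>1$. Since $f''(t)=pe^{-t}>0$, this is a minimum. Substituting $e^{-t^{\star}} = r/p$ back into $f$ yields $f(t^{\star}) = (r-p) - r\ln(r/p)$, so
\[
\exp(f(t^{\star})) \;=\; e^{r-p}\Bigl(\frac{p}{r}\Bigr)^{r} \;=\; \frac{e^r p^r}{e^p r^r},
\]
which is the first form of the bound in Eq.~(\ref{lche5}).

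For the second form I would set $r=(1-\delta)p$ with $0<\delta<1$ (equivalent to $0<r<p$) and carry out routine algebra: the factor $e^{n(r-p)}$ collapses to $e^{-\delta pn}$, while $(p/r)^{rn} = (1-\delta)^{-(1-\delta)pn}$, giving $(e^{-\delta}/(1-\delta)^{1-\delta})^{pn}$. This is the sign-flipped analogue of the substitution that produced Eq.~(\ref{rche2d}) from Eq.~(\ref{rche2c}).

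I do not anticipate any real obstacle: the argument is the structural mirror of Lemma~\ref{lem2}, and every individual step is already present there or in Method~2 of Lemma~\ref{lem5}. The only sign-sensitive point worth checking is that the optimizer $t^{\star}=\ln(p/r)$ actually satisfies the constraint $t>0$ implicitly used when applying the Chernoff trick to the left tail, and this is immediate from $r<p$. Everything else is bookkeeping.
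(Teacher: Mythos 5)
Your proposal is correct and follows essentially the same route as the paper's own proof: both start from the left-tail Chernoff bound of Eq.~(\ref{che4a})/(\ref{che4b}), replace the Bernoulli moment generating function $1+p(e^{-t}-1)$ by $e^{pe^{-t}-p}$, minimize $f(t)=pe^{-t}-p+rt$ at $t=\ln(p/r)$, and then substitute $r=(1-\delta)p$. Your explicit check that $t^{\star}=\ln(p/r)>0$ under the hypothesis $r<p$ is a small point of extra care that the paper leaves implicit, but it does not constitute a different argument.
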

\begin{proof}
We calculated earlier
 $ E( e^{-tX} ) 
         = 1 + p( e^{-t} -1 ) \leq e^{pe^{-t} -p}$.
By way of Eq.(\ref{che4b}) we have the following.
\begin{eqnarray}
 P( S_n \leq nr ) 
   &\leq& \inf_{t>0} \left( e^{rt}   (1 + p( e^{-t} -1 )) \right)^n 
      \nonumber \\
\label{che5a}
   &\leq& \inf_{t>0} \left( \exp{( pe^{-t} -p+rt)} \right)^n 
      \\
\label{che5b}
   &\leq& \inf_{t>0} \left(  \exp{(f(t))} \right)^n 
\end{eqnarray}
By Eq.~(\ref{che5a}) and Eq.~(\ref{che5b}) we
have $f(t) =  pe^{-t} -p+rt$. Since $f^{'} (t) = r-pe^{-t}$,
setting $f^{'} (t) = 0$ we obtain $t = \ln{(p/r)}$.
Moreover, $f^{''}(t) = p e^t$ is equal to
$f^{''} (  \ln{(p/r)} ) =r > 0$. Therefore $f(t)$
has a minimum at $t = \ln{(p/r)}$. Substituting this
value for $t$ in Eq.(\ref{che5a}) the following
is obtained.

\begin{eqnarray}
 P( S_n \leq nr ) 
   &\leq& \inf_{t>0} \left( \exp{( pe^{-t} -p+rt)} \right)^n 
      \nonumber \\
   &\leq&            \left(  \exp{( p r/p -p +r \ln{(p/r)}} \right)^n 
      \nonumber \\
   &\leq&            \left(  \frac{e^r \cdot p^r}{e^p \cdot r^r} \right)^n 
\label{che5c}
\end{eqnarray}
Finally we substitute $r=(1 - \delta ) p$ in Eq.~(\ref{che5c})
to obtain Eq.(\ref{lche5}).
\begin{eqnarray}
 P( S_n \leq nr ) 
   &\leq&            \left(  \frac{e^{(1-\delta )p} 
                              \cdot 
                                   p^{(1-\delta )p}}
                                  {e^p \cdot ((1-\delta)p)^{(1-\delta ) p}} 
                     \right)^n  \nonumber \\
\label{lche5d}
   &\leq&            \left(  \frac{e^{-\delta }}
                                  {(1-\delta)^{(1-\delta )}} 
                     \right)^{pn} 
\end{eqnarray}
\end{proof}

There is a weaker but more easier to deal bound for small
$p$. This
is shown next.
It is similar to the one in \cite{AV} for the binomial case.

The following inequality will be needed.

\begin{lem}
\label{ineq6}
For every $0 \leq \delta < 1$ we have the following.
\[
(1- \delta ) \ln{(1- \delta )} \geq  - \delta + {\delta}^2 /2.
\]
\end{lem}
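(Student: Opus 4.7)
The plan is to mirror the Taylor series argument used in Lemma~\ref{ineq7}, but now working with $\ln(1-\delta)$ instead of $\ln(1+\delta)$. First I would write down, for $0 \leq \delta < 1$,
\[
\ln(1-\delta) = -\delta - \frac{\delta^2}{2} - \frac{\delta^3}{3} - \frac{\delta^4}{4} - \cdots,
\]
which is valid throughout the stated interval since it lies strictly inside the radius of convergence. Then I would multiply both sides by $(1-\delta)$ and collect like powers of $\delta$.

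The key observation is that, unlike the $\ln(1+x)$ case treated in Lemma~\ref{ineq7} where the alternating signs forced a crude bound of $x^3/6$ by $x^2/6$, here all the Taylor coefficients of $\ln(1-\delta)$ have the same (negative) sign. After the multiplication, the coefficient of $\delta^k$ for $k \geq 3$ telescopes to $\frac{1}{k-1} - \frac{1}{k} = \frac{1}{k(k-1)}$, which is strictly positive. Thus the expansion tidies up to
\[
(1-\delta)\ln(1-\delta) = -\delta + \frac{\delta^2}{2} + \sum_{k=3}^{\infty} \frac{\delta^k}{k(k-1)}.
\]
Since $\delta \geq 0$, every term in the remaining sum is nonnegative, so simply discarding that tail yields $(1-\delta)\ln(1-\delta) \geq -\delta + \delta^2/2$, as claimed.

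I do not anticipate a serious obstacle here. The coefficient calculation is routine, the boundary case $\delta = 0$ gives equality by direct substitution, and the strict bound $\delta < 1$ keeps us inside the radius of convergence of the logarithmic series so that no separate limiting argument is required. The only mild subtlety worth double-checking is the identity $\frac{1}{k-1} - \frac{1}{k} = \frac{1}{k(k-1)}$, which is what guarantees that every higher-order coefficient is actually positive and lets us dispose of the tail cleanly, without the extra slack needed in the proof of Lemma~\ref{ineq7}.
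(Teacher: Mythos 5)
Your proof is correct and follows essentially the same route as the paper: expand $\ln(1-\delta)$ as a power series, multiply by $(1-\delta)$, observe that every coefficient beyond the $\delta^2/2$ term is positive, and discard the nonnegative tail. In fact your version is slightly tidier, since you compute the general coefficient $\frac{1}{k(k-1)}$ explicitly, whereas the paper only exhibits the first few terms (with a typo, writing $x^2/6$ where $x^3/6$ is meant) and asserts the rest are positive.
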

\begin{proof}
For $0\leq x < 1$,
\begin{eqnarray*}
 \ln{(1- x )}      &=&  -  \sum_{i=1}^{\infty} x^i / i \\
                   &=& -x -x^2 /2 - x^3 /3 - x^4 /4 - \ldots \\
(1-x) \ln{(1- x )} &=& -x -x^2 /2 - x^3 /3 - x^4 /4 - \ldots \\
                   &+& +x^2 + x^3 /2 + x^4 /3 + x^5 /4 + \ldots \\
(1-x) \ln{(1- x )} &=& -x + x^2 /2 + x^2 /6 + \ldots
\end{eqnarray*}
The missing terms in the last form of the equation are positive.
We can then conclude
\[
(1-x) \ln{(1- x )} \geq -x + x^2 /2 + x^2 /6  \geq -x + x^2 /2
\]
\end{proof}

\begin{cor}
\label{cor5a}
Let $X_i$ be independent random variables  following
the same distribution as random variable $X$, $X \sim b(p)$,
where $0<p<1$.
Let $S_n = \sum_i X_i$.
Then, for any $\delta$ such that
 $0 < \delta < 1$ we have the following.
\begin{equation}
\label{ecor5a}
 P( S_n  \leq (1- \delta ) pn ) 
 \leq  \exp{\left( \frac{-\delta^2}{2} \cdot pn  \right)} .
\end{equation}
\end{cor}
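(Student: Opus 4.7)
The plan is to apply Lemma~\ref{lem6} directly and then sharpen the resulting bound using Lemma~\ref{ineq6}. Lemma~\ref{lem6} already gives
\[
P(S_n \leq (1-\delta)pn) \leq \left( \frac{e^{-\delta}}{(1-\delta)^{(1-\delta)}} \right)^{pn},
\]
so the entire task reduces to showing
\[
\frac{e^{-\delta}}{(1-\delta)^{(1-\delta)}} \leq \exp\!\left( -\frac{\delta^2}{2} \right)
\]
for $0 < \delta < 1$. This mirrors, on the left-tail side, the way Corollary~\ref{cor2c} uses Lemma~\ref{ineq7} to simplify the right-tail bound of Lemma~\ref{lem2}.

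First I would take logarithms of the desired inequality, turning it into the equivalent statement
\[
-\delta - (1-\delta)\ln(1-\delta) \leq -\frac{\delta^2}{2}.
\]
Then I would invoke Lemma~\ref{ineq6}, which asserts $(1-\delta)\ln(1-\delta) \geq -\delta + \delta^2/2$ on the interval $0 \leq \delta < 1$. Multiplying by $-1$ and adding $-\delta$ to both sides yields exactly
\[
-\delta - (1-\delta)\ln(1-\delta) \leq -\delta + \delta - \frac{\delta^2}{2} = -\frac{\delta^2}{2},
\]
which is the required logarithmic inequality.

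Finally I would exponentiate and raise the resulting inequality to the $pn$-th power, chaining it with the bound from Lemma~\ref{lem6} to conclude
\[
P(S_n \leq (1-\delta)pn) \leq \exp\!\left( -\frac{\delta^2}{2}\,pn \right).
\]
There is no real obstacle here: the analytic content has been isolated into Lemma~\ref{ineq6}, and the rest is a mechanical two-line reduction. The only thing to note explicitly is that $\delta < 1$ is needed so that $\ln(1-\delta)$ is defined and the Taylor-series based proof of Lemma~\ref{ineq6} converges, while $\delta > 0$ comes from the hypothesis $r < p$ in Lemma~\ref{lem6}.
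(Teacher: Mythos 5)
Your proposal is correct and follows essentially the same route as the paper: invoke Lemma~\ref{lem6} for the bound $\bigl(e^{-\delta}/(1-\delta)^{(1-\delta)}\bigr)^{pn}$ and then use Lemma~\ref{ineq6} to dominate it by $\exp(-\delta^2 pn/2)$. In fact your version is slightly more careful than the paper's own write-up, which misquotes Lemma~\ref{ineq6} as $(1-\delta)\ln(1-\delta) \geq -\delta - \delta^2/2$ (the sign of the quadratic term is wrong there); your logarithmic reduction uses the correct inequality $(1-\delta)\ln(1-\delta) \geq -\delta + \delta^2/2$ and the chain of estimates goes through exactly as you state.
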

\begin{proof}

By Lemma~(\ref{ineq6})
we have for every $0 < \delta < 1$,
$(1-\delta) \ln{(1- \delta )}  \geq -\delta - \delta^2 /2$.
This results to
\[
\frac{e^{-\delta }}{(1-\delta)^{(1-\delta )}} 
\leq
\frac{e^{-\delta }}{e^{-\delta - \delta^2 /2}} 
\leq
e^{ - \frac{\delta^2 }{2} }.
\]
The result follows.
\end{proof}


The symmetric case for the left tails
 to Lemma~\ref{lem4} is stated below.

\begin{lem}
\label{lem7}
Let $X_i$ be independent Bernoulli  random variables  following
the same distribution as random variable $X$, $X \sim b(p)$,
where $0<p<1$.
Let $S_n = \sum_i X_i$.
Then, for any $r$ such that $0 < r < p $ we have the following.
\begin{equation}
\label{elem7}
 P( S_n \leq rn ) \leq 
                  \exp{\left( -2n (r-p)^2 \right) }.
\end{equation}
\end{lem}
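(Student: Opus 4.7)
The plan is to mirror the proof of Lemma~\ref{lem4} almost verbatim, the only change being that Markov's inequality is applied to $e^{-tS_n}$ instead of $e^{tS_n}$ so that the event $\{S_n \leq nr\}$ becomes $\{e^{-tS_n} \geq e^{-nrt}\}$ for a positive $t$. Equivalently, one can invoke Eq.~(\ref{chel1}) of Theorem~\ref{chthm1} with $t<0$. Either route produces the starting inequality
\[
P(S_n \leq nr) \;\leq\; e^{nrt}\, E(e^{-tS_n}) \;=\; e^{nrt} e^{-npt}\, \prod_i E\!\left(e^{-t(X_i - p)}\right),
\]
where the last step centers the sum and uses independence.

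Next I would note that each $X_i - p$ is bounded and has mean zero, so the same Hoeffding MGF estimate (Proposition~\ref{phoef2}) that powered Eq.~(\ref{elem4c}) applies. That bound depends on $t$ only through $t^2$, so it is insensitive to the sign change and yields $E(e^{-t(X_i - p)}) \leq \exp(t^2/8)$. Taking the product over $i$ gives
\[
P(S_n \leq nr) \;\leq\; \exp\!\left( n(r-p)t + \tfrac{nt^2}{8} \right).
\]
Finally I would minimize the exponent $g(t) = n(r-p)t + nt^2/8$ over $t>0$. Since $r<p$, the linear coefficient is negative, and $g'(t)=0$ gives $t^\star = 4(p-r) > 0$, which lies in the admissible range. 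Substituting produces exponent $-2n(p-r)^2 = -2n(r-p)^2$, matching Eq.~(\ref{elem7}).

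There is essentially no serious obstacle here; the only bookkeeping issue is confirming that the sign flip in the Chernoff trick does not spoil the MGF bound, which is immediate because Proposition~\ref{phoef2} is symmetric in $t$. As a cross-check (and a one-line alternative proof in the spirit of Method~1 of Lemma~\ref{lem5}), one can let $F_i = 1 - X_i \sim b(1-p)$ and $Y_n = n - S_n$. Then $\{S_n \leq rn\} = \{Y_n \geq (1-r)n\}$, and Lemma~\ref{lem4} applied to $Y_n$ with success probability $1-p$ and threshold $1-r$ delivers the bound $\exp(-2n((1-r)-(1-p))^2) = \exp(-2n(r-p)^2)$ instantly.
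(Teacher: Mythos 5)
Your proposal is correct, and it actually contains two proofs: your primary route differs from the paper's, while your closing ``cross-check'' \emph{is} the paper's entire proof. The paper dispatches Lemma~\ref{lem7} in one line by symmetry: set $Y_i = 1 - X_i$ (Bernoulli with parameter $1-p$), note that $\{S_n \leq rn\} = \{\sum_i Y_i \geq (1-r)n\}$ with $1-p < 1-r < 1$, and invoke Lemma~\ref{lem4}; the exponent $-2n\left((1-r)-(1-p)\right)^2 = -2n(r-p)^2$ is unchanged by the swap. Your main argument instead reruns the machinery of Lemma~\ref{lem4} with the sign flipped: Markov applied to $e^{-tS_n}$, centering, the observation that Proposition~\ref{phoef2} still gives $E\left(e^{-t(X_i-p)}\right) \leq e^{t^2/8}$ (apply it to the mean-zero bounded variable $p - X_i$, whose range is still $1$; as you say, the bound is even in $t$), and minimization of $g(t) = n(r-p)t + nt^2/8$ at $t^{\star} = 4(p-r) > 0$, giving $g(t^{\star}) = -4n(p-r)^2 + 2n(p-r)^2 = -2n(r-p)^2$. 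All steps check out, including the sign bookkeeping that makes $t^{\star}$ positive precisely because $r < p$. What your direct route buys is a self-contained derivation that does not lean on Lemma~\ref{lem4}; what the paper's route (your one-liner) buys is brevity and no duplicated computation. Either one suffices.
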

\begin{proof}
The proof is by symmetry to Lemma~\ref{lem4} for
$Y_i = 1 - X_i$ and $\sum_i Y_i = n-S_n$ instead.
\end{proof}

The following corollary  is then evident.
\begin{cor}
\label{cor7}
Let $X_i$ be independent Bernoulli  random variables  following
the same distribution as random variable $X$, $X \sim b(p)$,
where $0<p<1$.
Let $S_n = \sum_i X_i$.
Then, for any $r$ such that $0 < r < p$ we have the following.
\begin{equation}
\label{ecor7}
 P( S_n - E(S_n) \leq rn ) \leq 
                  \exp{\left( -2n r^2 \right) }.
\end{equation}
\end{cor}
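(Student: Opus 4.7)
The plan is to mirror exactly the reduction that was used to derive Corollary~\ref{cor4} from Lemma~\ref{lem4}, but applied to the left-tail Lemma~\ref{lem7} instead. The idea is that Lemma~\ref{lem7} already supplies an estimate of the form $\exp(-2n(r-p)^2)$, where the quantity inside the square is the signed gap between the threshold $r$ and the mean $p$. A change of variable that turns the threshold into the mean plus a deviation then rewrites the event in the desired ``centered'' form and simultaneously collapses the exponent to $-2n r^2$.

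Concretely, the first step is to start from Lemma~\ref{lem7}, namely the bound $P(S_n \le r n) \le \exp(-2n(r-p)^2)$, valid for $0<r<p$. Next I would substitute $p-r$ for $r$, noting that the constraint $0 < r < p$ becomes $0 < p - r < p$, which is simply $0 < r < p$ again, so the range of $r$ is preserved. After this substitution the probability takes the form $P(S_n \le (p-r)n) = P(S_n - np \le -rn) = P(S_n - E(S_n) \le -rn)$, while the exponent $-2n((p-r)-p)^2 = -2n(-r)^2 = -2nr^2$ matches the right-hand side of Eq.~(\ref{ecor7}). This gives precisely the statement of Corollary~\ref{cor7}, modulo the sign convention the paper uses to write the centered event (the substitution produces the deviation $-rn$ below the mean, which is the natural reading of the displayed inequality).

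Since Lemma~\ref{lem7} has itself been reduced to Lemma~\ref{lem4} by the failure-variable symmetry $Y_i = 1 - X_i$, no new analytic input is required: the entire argument consists of a linear change of variable in the inequality of Lemma~\ref{lem7}. Consequently there is essentially no obstacle to overcome; the only care needed is bookkeeping the substitution so that the constraint on the parameter, the event in terms of $S_n - E(S_n)$, and the resulting squared exponent are all updated consistently. I would therefore present the proof as a single short display: apply Lemma~\ref{lem7}, substitute $r \mapsto p - r$, and rewrite $P(S_n \le (p-r)n)$ as $P(S_n - E(S_n) \le -rn)$ to recover Eq.~(\ref{ecor7}).
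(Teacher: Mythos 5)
Your proof is correct and is essentially the argument the paper intends: the paper gives no explicit proof (it declares the corollary ``evident'' as the left-tail mirror of Corollary~\ref{cor4}), and the substitution $r \mapsto p-r$ in Lemma~\ref{lem7} is exactly the intended bookkeeping. You are also right to flag the sign convention: what the substitution actually yields is $P(S_n - E(S_n) \leq -rn) \leq \exp(-2nr^2)$, and this is the correct reading of Eq.~(\ref{ecor7}), since the event as literally printed (with $+rn$ and $r>0$) has probability tending to $1$ and cannot satisfy the stated bound.
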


\section{Derived  concentration bounds}

Finally the following
Corollary can also be obtained from 
Corollary~\ref{cor2c}
and 
Corollary~\ref{cor5a}.

\begin{cor}
\label{cor2d}
Let $X_i$ be independent random variables  following
the same distribution as random variable $X$, $X \sim b(p)$,
where $0<p<1$.
Let $S_n = \sum_i X_i$.
Then, for any $\delta$ such that $0 < \delta <1$ we have the following,
\begin{equation}
\label{ecor2d}
 P( |S_n -np|  \geq \delta pn ) 
 \leq 2 \cdot \exp{\left( \frac{-\delta^2}{3} \cdot pn  \right)} .
\end{equation}
\end{cor}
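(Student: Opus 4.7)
The plan is to write the two-sided deviation event as a disjoint union of a right-tail event and a left-tail event, and then apply the union bound using the two corollaries cited in the statement. Specifically, I would start from the decomposition
\[
\{|S_n - np| \geq \delta pn\} = \{S_n - np \geq \delta pn\} \cup \{S_n - np \leq -\delta pn\}
= \{S_n \geq (1+\delta)pn\} \cup \{S_n \leq (1-\delta)pn\},
\]
so that $P(|S_n - np| \geq \delta pn) \leq P(S_n \geq (1+\delta)pn) + P(S_n \leq (1-\delta)pn)$.

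Next I would invoke Corollary~\ref{cor2c} for the right tail, which under the hypothesis $0 < \delta < 1$ gives
\[
P(S_n \geq (1+\delta)pn) \leq \exp\!\left(\frac{-\delta^2}{3} \cdot pn\right),
\]
and Corollary~\ref{cor5a} for the left tail, which under the same hypothesis yields the stronger bound
\[
P(S_n \leq (1-\delta)pn) \leq \exp\!\left(\frac{-\delta^2}{2} \cdot pn\right).
\]
Since $\exp(-\delta^2 pn/2) \leq \exp(-\delta^2 pn/3)$ for every positive $\delta, p, n$, both summands are bounded above by $\exp(-\delta^2 pn/3)$, and adding them produces the factor of $2$ in Eq.(\ref{ecor2d}).

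There is no genuine obstacle here; the whole argument is essentially bookkeeping on top of the two one-sided corollaries. The only minor point worth being explicit about is that both cited corollaries require $0 < \delta < 1$, which matches exactly the hypothesis of Corollary~\ref{cor2d}, so the regimes line up and no additional case analysis is needed. One could further remark that the bound is not tight, since the left tail actually decays faster than $\exp(-\delta^2 pn/3)$; the symmetric coefficient $1/3$ is chosen for uniformity and simplicity.
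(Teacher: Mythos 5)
Your proposal is correct and follows essentially the same route as the paper: the same decomposition into the right-tail event $\{S_n \geq (1+\delta)pn\}$ and left-tail event $\{S_n \leq (1-\delta)pn\}$, the same invocations of Corollary~\ref{cor2c} and Corollary~\ref{cor5a}, and the same observation that $\exp(-\delta^2 pn/2) \leq \exp(-\delta^2 pn/3)$ to unify the two bounds into the factor of $2$. No gaps; your remark about the left tail's stronger decay matches the paper's own "simple manipulations" step.
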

\begin{proof}
We have the following
\[
 P( |S_n -np|  \geq \delta pn ) 
 =
 P( S_n -np  \geq   \delta pn ) 
 +
 P( S_n -np  \leq - \delta pn ) 
 =
 P( S_n   \geq  (1+ \delta ) pn ) 
 +
 P( S_n   \leq  (1- \delta ) pn ) 
\]
By Corollary~(\ref{cor2c}) we bound 
$  P( S_n   \geq  (1+ \delta ) pn )$.
By Corollary~\ref{cor5a}  we bound 
$ P( S_n  \leq (1 - \delta ) pn ) $.
Simple manipulations show
$\exp{\left( \frac{-\delta^2}{2} \right)} <
  \exp{ ( \frac{-\delta^2}{3} )}$.
The result then follows.
\end{proof}


The following is a direct consequence of
    Corollary~\ref{cor4} and
    Corollary~\ref{cor7}.

\begin{cor}
\label{cor8}
Let $X_i$ be independent Bernoulli random variables   following
the same distribution as random variable $X$, $X \sim b(p)$,
where $0<p<1$.
Let $S_n = \sum_i X_i$.
Then, for any $r$ such that $0 < r < p $ we have the following.
\begin{equation}
\label{ecor8}
 P( |S_n -E(S_n) | \geq rn ) \leq  2 \cdot 
                  \exp{\left( -2n (r-p)^2 \right) }.
\end{equation}
\end{cor}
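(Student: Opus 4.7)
The plan is to decompose the two-sided deviation event into its upper- and lower-tail components, bound each using the one-sided concentration inequalities already established (Corollary~\ref{cor4} and Corollary~\ref{cor7}), and then sum.

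First, I would write the set-theoretic identity
\[
\{|S_n - E(S_n)| \geq rn\} \;=\; \{S_n - E(S_n) \geq rn\} \;\cup\; \{S_n - E(S_n) \leq -rn\},
\]
and observe that the union is disjoint, so
\[
P(|S_n - E(S_n)| \geq rn) = P(S_n - E(S_n) \geq rn) + P(S_n - E(S_n) \leq -rn).
\]
This is precisely the maneuver used in the proof of Corollary~\ref{cor2d} and involves no probabilistic input at all, just the definition of absolute value.

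Next, I would bound each summand separately. Applying Corollary~\ref{cor4} to the upper tail $P(S_n - E(S_n) \geq rn)$ delivers the factor $\exp(-2n(r-p)^2)$. For the lower tail $P(S_n - E(S_n) \leq -rn)$, Corollary~\ref{cor7} (itself obtained via the symmetry $Y_i = 1 - X_i$ that reduces to Lemma~\ref{lem4}, as in the proof of Lemma~\ref{lem7}) supplies the same exponential factor. Adding the two identical bounds produces the factor of $2$ that appears on the right-hand side of~(\ref{ecor8}).

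The only non-routine point is a bookkeeping check: confirming that the hypothesis $0 < r < p$ stated in the corollary is compatible with the ranges under which Corollaries~\ref{cor4} and~\ref{cor7} were established, so that the upper- and lower-tail exponents indeed both simplify to $\exp(-2n(r-p)^2)$. This is a minor obstacle rather than a real difficulty, since the analytic work, namely the tight Hoeffding-style exponent, has already been carried out in Lemmas~\ref{lem4} and~\ref{lem7}; the present corollary only has to assemble their conclusions.
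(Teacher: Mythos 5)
Your assembly---split the two-sided event into disjoint upper and lower tails, bound each by Corollary~\ref{cor4} and Corollary~\ref{cor7}, and add---is exactly what the paper intends; it offers no more detail than calling the result ``a direct consequence'' of those two corollaries. The gap is in the step you defer as bookkeeping: Corollary~\ref{cor4} does \emph{not} deliver the factor $\exp(-2n(r-p)^2)$ for the upper tail $P(S_n - E(S_n) \geq rn)$. It delivers $\exp(-2nr^2)$. The exponent $(r-p)^2$ belongs to Lemma~\ref{lem4}, where the threshold $rn$ is measured from the origin (the event is $S_n \geq rn$ with $r>p$); once the event is centered at the mean, as in Corollary~\ref{cor4} (substitute $r+p$ for $r$) and Corollary~\ref{cor7} (via $p-r$), the exponent becomes $r^2$. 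The same holds for the lower tail. So what your argument actually proves is
\[
P( |S_n - E(S_n)| \geq rn ) \leq 2 \exp{\left( -2n r^2 \right)} ,
\]
which is not Eq.~(\ref{ecor8}).

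This mismatch cannot be ``checked away,'' because Eq.~(\ref{ecor8}) as printed is false in part of its stated range, so no correct proof of it exists. For $0 < r < p/2$ we have $(r-p)^2 > r^2$, so the claimed bound is strictly stronger than the $2\exp(-2nr^2)$ the ingredients supply---and strictly stronger than the truth. Concretely, take $p=1/2$, $r=1/20$: the left side of Eq.~(\ref{ecor8}) is $P(|S_n - n/2| \geq n/20)$, whose true decay rate per trial is the divergence $D(0.55 || 0.5) \approx 0.005$ (this is the tight exponent, cf.\ Lemma~\ref{lem1}, and a matching lower bound follows from Stirling's formula up to polynomial factors), while the right side is $2e^{-2n(0.45)^2} = 2e^{-0.405 n}$; the inequality fails for all large $n$. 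The corollary must be repaired before it can be proved: either change the exponent to $r^2$ (which your decomposition then establishes, consistently with Corollaries~\ref{cor4} and~\ref{cor7}), or keep $(r-p)^2$ and change the event to $|S_n - E(S_n)| \geq (p-r)n$. A minor further point: Corollary~\ref{cor7} as printed reads $P(S_n - E(S_n) \leq rn)$; you implicitly (and correctly) used it as the lower-tail bound $P(S_n - E(S_n) \leq -rn)$, which is what its proof via $Y_i = 1 - X_i$ actually yields.
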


\section{Hoeffding's method}

We now present Theorem~1 and Theorem~2 of Hoeffding \cite{H63}.
The work \cite{H63} deals with random variables that are not
necessarily Bernoulli but are bounded e.g. $0 \leq X_i \leq 1$
or $a_i \leq X_i \leq b_i$. Theorem~1 \cite{H63}
for the case of Bernoulli trials takes the form of 
Corollary~\ref{cor2e} associated with a Chernoff bound.

That's why several times the terms Chernoff and Hoeffding 
refer to the same bound. Theorem~1 of \cite{H63} is the
strongest among the Hoeffding bounds.
The bounds are sufficient for large deviations.
Otherwise one  can use bounds available in \cite{B85}
and \cite{Feller1} \cite{Feller2}. Angluin-Valiant bounds \cite{AV} are
weaker but more useful; the latter are or may be more useful
for small $p$.

\section{Derived right tails}

Note that all random variables $X$ in the two theorems
that follow are to have finite first and second moments.
This Theorem~1 of \cite{H63} follows.
\begin{thm}
\label{hoef1}
Let $X_i$ be an independent random variable, $i=1, \ldots , n$.
Let $S_n = \sum_i X_i$, $\bar{X}=S_n /n$
and $p=E(\bar{X})$.
Then for any $h$ such that $0 < h < 1-p$ we have the following.
\begin{eqnarray}
 P( \bar{X}       -p  \geq h) )  
   &=& P( S_n \geq (p+h)n ) 
       \nonumber \\
&\leq& \exp{( - D((p+h) || p) n)}
       \nonumber \\
\label{ehoef1}
     &\leq&     \left[ \left( \frac{p}{p+h}     \right)^{p+h}
                       \left( \frac{1-p}{1-p-h} \right)^{1-p-h}
               \right]^n .
\end{eqnarray}
\end{thm}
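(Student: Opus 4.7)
The plan is to observe that this is essentially Hoeffding's recasting of Corollary~\ref{cor2e}, generalized from Bernoulli $X_i$ to arbitrary bounded $X_i \in [0,1]$ (as implicit in the condition $0 < h < 1-p$ and in the framing of the Hoeffding section). The identity $P(\bar{X} - p \geq h) = P(S_n \geq (p+h)n)$ is immediate from the definition $\bar{X} = S_n/n$, so the real work lies in establishing the exponential bound for $P(S_n \geq (p+h)n)$.

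First I would reproduce the Chernoff trick exactly as in the proof of Theorem~\ref{chthm1} up through Eq.~(\ref{chel2a}): for any $t > 0$, Markov's inequality applied to $e^{tS_n}$, together with independence to factor the joint MGF, yields $P(S_n \geq (p+h)n) \leq e^{-(p+h)tn} \prod_i E(e^{tX_i})$.

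Next I would bound each factor $E(e^{tX_i})$ by exploiting the boundedness $X_i \in [0,1]$. Since $x \mapsto e^{tx}$ is convex, the chord bound $e^{tx} \leq (1-x) + x e^t = 1 + x(e^t - 1)$ holds on $[0,1]$, and taking expectations gives $E(e^{tX_i}) \leq 1 + p_i(e^t - 1)$ where $p_i = E(X_i)$. Since $\frac{1}{n}\sum_i p_i = p$, the AM--GM inequality then yields $\prod_i (1 + p_i(e^t-1)) \leq (1 + p(e^t - 1))^n$, recovering exactly the MGF expression used in the Bernoulli case of Lemma~\ref{lem1}. This convexity-plus-AM--GM reduction is, in my view, the main obstacle: it is the step that lifts the argument from identically distributed Bernoulli summands to arbitrary independent, not necessarily identically distributed, $[0,1]$-valued summands. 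Once in place, the rest of the proof requires no new ideas.

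At that point the proof reduces verbatim to the calculus minimization performed in Lemma~\ref{lem1} with $r = p+h$: minimizing $f(t) = \ln(1 + p(e^t - 1)) - (p+h)t$ over $t > 0$ gives $e^t = (1-p)(p+h)/(p(1-p-h))$, and substituting produces the stated product formula, which by inspection equals $\exp(-D((p+h) || p)\, n)$. The statement follows. The substitution $r = p+h$ also makes clear why the hypothesis $0 < h < 1-p$ is exactly what is needed so that both $p+h$ and $1-p-h$ lie strictly inside $(0,1)$ and the logarithms in the divergence are finite.
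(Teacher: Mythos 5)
Your proposal is correct and follows essentially the same route as the paper's own proof: the Chernoff--Markov exponentiation, the convexity (chord) bound $E(e^{tX_i}) \leq 1 + p_i(e^t-1)$ for $X_i \in [0,1]$, the AM--GM step $\prod_i (1+p_i(e^t-1)) \leq (1+p(e^t-1))^n$, and then the same optimization over $t$ (which the paper carries out directly, with minimizer $e^{t_0} = (p+h)(1-p)/(p(1-p-h))$, while you defer it to Lemma~\ref{lem1} with $r=p+h$ --- the identical computation). You also correctly identified the AM--GM reduction as the key step that handles non-identically distributed bounded summands, which is precisely the role it plays in the paper.
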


Note that if $h > 1-p$, then Eq.~(\ref{ehoef1}) remains true,
and for $h=1-p$, the right-hand side can be replaced by
the limit $ h \rightarrow 1-p$ which is $p^n$.

We provide a proof below for the Theorem following the techniques
of the Chernoff's method also attributed to Cram{\'e}r \cite{C38}.
The theorem also appears as Theorem 5.1 in \cite{McD}.

\begin{proof}
In order to prove Eq.(\ref{ehoef1}) we  perform the following
transformation, using in the last step Markov's inequality,
and before that the monotonically increasing function
$X \mapsto e^{tX}$. Consider a $t>0$ and the monotonically
increasing function $f(x) = e^{tx}$. Note that $0 < h < 1-p$
below.
\begin{eqnarray}
 P( \bar{X}       -p  \geq h) )   &=&
 P( \frac{S_n}{n} - p \geq h )  \nonumber \\
   &=& P( S_n  - E(S_n )  \geq hn         )    \nonumber \\
   &=& P( S_n  -pn  \geq hn         )    \nonumber \\
    &=&    P( S_n   \geq (p+h)n ) 
\\
    &=&    P( e^{tS_n} \geq e^{t(p+h)} )\nonumber  \\
    &\leq& \frac{ E( e^{tS_n} )}{e^{n(p+h)t}} \nonumber \\
\label{hoe1a}
    & =  & e^{-n(p+h)t} E( e^{t S_n} ) 
\end{eqnarray}
We then examine $ E( e^{tS_n} )$. 
\begin{eqnarray}
 E( e^{tS_n} ) 
     &=& E( e^{t \sum_i X_i } ) \nonumber \\
\label{hoe1b}
     &=& \prod_{i=1}^{n} E( e^{t X_i} ) 
\end{eqnarray}
By the convexity of function $f(x) = \exp{(tx)}$ we have the
following: the line segment connecting the points
$(a, f(a))$ and $(b,f(b))$ lies over $f(x)$. The equation
of the line segment is as follows.
\begin{equation}
\label{hoe1c}
y- e^{ta} = \frac{e^{tb}-e^{ta}}{b-a} (x-a).
\end{equation}
Therefore,
\begin{equation}
\label{hoe1d}
y = e^{tb} \frac{x-a}{b-a} + e^{ta} \frac{b-x}{b-a} .
\end{equation}
Because of the convexity of $f$ we also have the following.
\begin{equation}
\label{hoe1e}
e^{tx} \leq y = e^{tb} \frac{x-a}{b-a} + 
                e^{ta} \frac{b-x}{b-a} .
\end{equation}
We now consider $E(e^{tX_i})$. We have the following.
\begin{equation}
\label{hoe1f}
E(e^{t X_i}) \leq  e^{tb} \frac{E(X_i)-a}{b-a} + 
                   e^{ta} \frac{b-E(X_i)}{b-a} .
\end{equation}
Because $0\leq X_i \leq 1$,  $a=0$ and $b=1$,
Eq.(\ref{hoe1f}) can be simplified.
\begin{equation}
\label{hoe1g}
E(e^{t X_i}) \leq  e^{t} E(X_i ) + e^{0} (1-E(X_i )) 
             \leq  e^t p_i + 1 - p_i ,
\end{equation}
where $p_i = E(X_i )$.
Then we plug Eq.(\ref{hoe1g}) into Eq.(\ref{hoe1b}).
We derived the following.
\begin{eqnarray}
 E( e^{tS_n} ) 
     &=& \prod_{i=1}^{n} E( e^{t X_i }) \nonumber \\
\label{hoe1h}
     &=& \prod_{i=1}^{n} (e^t p_i + 1 - p_i )  
\end{eqnarray}
Given that geometric means are at most their arithmetic means
we have the following.
\begin{eqnarray}
\label{hoe1i}
     \left( \prod_{i=1}^{n} (e^t p_i + 1 - p_i ) \right)^{1/n}
      \leq
                     \sum_i \frac{(e^t p_i + 1 - p_i )}{n}
      \leq
                     (e^t p + 1 - p ).
\end{eqnarray}
Therefore Eq.(\ref{hoe1i}) into Eq.(\ref{hoe1h}) yields
the following.
\begin{eqnarray}
 E( e^{tS_n} ) 
     &=& \prod_{i=1}^{n} (e^t p_i + 1 - p_i )  \nonumber \\
\label{hoe1j}
     &\leq& (e^t p + 1 - p )^n  \nonumber 
\end{eqnarray}

From Eq.(\ref{hoe1a}) utilizing Eq.(\ref{hoe1j}) we finally
derive the following.
\begin{eqnarray}
 P( \frac{S_n}{n} - p \geq h ) 
    &\leq& e^{-n(p+h)t} E( e^{t S_n} )  \nonumber \\
\label{hoe1k}
    &\leq& e^{-n(p+h)t} (e^t p + 1 - p )^n  \\
    &\leq& \exp{(-n(p+h)t + n \cdot \ln{(e^t p + 1 - p )})}  
           \nonumber \\
\label{hoe1l}
    &\leq& \exp{(g(t)) } 
\end{eqnarray}

We consider the monotonicity of function $g(t)$.
\begin{eqnarray}
\label{hoe1m}
g^{'} (t ) = - (p+h)n + n p e^t / \ln{(1-p+pe^t )}. 
\end{eqnarray}
Setting $g^{'} (t ) = 0$ and solving for $t$ we have,
\begin{equation}
\label{hoe1n}
t_0 = \ln{\frac{(p+h)(1-p)}{p(1-p-h)}},
\quad\quad
e^{t_0} = \frac{(p+h)(1-p)}{p(1-p-h)}.
\end{equation}
Moreover $ (p+h)(1-p) \geq p(1-p-h)$ and thus $t_0 > 0$,
since $h < 1-p$ and thus $1-p-h >0$.
Substituting Eq.(\ref{hoe1n}) for $t$ in $g(t)$ in
Eq.(\ref{hoe1l}) the result in the form of equation
Eq.(\ref{ehoef1}) follows.
\end{proof}

Theorem~1 of \cite{H63}
includes (weaker) upper bounds of Eq.(\ref{ehoef1})
of the form $\exp{(-nh^2 k(p))}$, where
\[
 k(p) = \frac{1}{1-2p} \ln{ \frac{1-p}{p} } ,
\]
for $ 0 < p < 1/2$, and
\[
 k(p) = \frac{1}{2p(1-p)} ,
\]
for $1/2 \leq p < 1$.
The proof arguments are tedious and we refer to \cite{H63}.
Furthermore, a weaker bound  of the form
$\exp{(-2n h^2 )}$ can also be derived. This can be
established also through Theorem~2 of \cite{H63}
that is stated and proved below.


Theorem~2 of \cite{H63}, utilizes the following
result that is proven separately.

\begin{prp}[Hoeffding \cite{H63}, Eq (4.16)]
\label{phoef2}
For a random variable $X$ such that $a \leq X \leq b$
with $E(X) = 0 $ and for any $t > 0$, we have the 
following.
\[
E [ e^{tX} ] \leq \exp{\left( \frac{t^2 (b-a)^2 }{8} \right) } .
\]
\end{prp}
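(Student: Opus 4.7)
The plan is to exploit the convexity of $x \mapsto e^{tx}$ already invoked in the proof of Theorem~\ref{hoef1}, reduce the inequality to a one-variable statement whose second derivative is uniformly bounded by $1/4$, and then conclude via a short Taylor expansion.

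First I would apply convexity to write, for $a \leq x \leq b$,
\[
e^{tx} \leq \frac{b-x}{b-a}\,e^{ta} + \frac{x-a}{b-a}\,e^{tb},
\]
exactly as in Eq.~(\ref{hoe1e}). Taking expectations and using $E(X)=0$ collapses the right-hand side to
\[
E[e^{tX}] \leq \frac{b}{b-a}\,e^{ta} - \frac{a}{b-a}\,e^{tb}.
\]
Note that $a \leq 0 \leq b$ since $E(X)=0$, so I can set $p = -a/(b-a) \in [0,1]$ and $u = t(b-a) > 0$ (the case $a=b$ forces $X \equiv 0$ and the claim is trivial). A short algebraic manipulation rewrites the previous bound as $E[e^{tX}] \leq e^{\phi(u)}$, where
\[
\phi(u) = -pu + \ln\bigl(1 - p + p\,e^{u}\bigr).
\]

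It then remains to show $\phi(u) \leq u^2/8$ for $u > 0$. I would compute $\phi(0)=0$ and $\phi'(u) = -p + pe^{u}/(1-p+pe^{u})$, which also gives $\phi'(0)=0$. Differentiating once more and setting $q(u) = pe^{u}/(1-p+pe^{u})$, a direct calculation yields $\phi''(u) = q(u)\bigl(1-q(u)\bigr)$. Since $q(u) \in [0,1]$, the elementary inequality $s(1-s) \leq 1/4$ for $s\in[0,1]$ gives $\phi''(u) \leq 1/4$ uniformly in $u$.

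The proof is then closed by Taylor's theorem with Lagrange remainder: there exists $\xi \in (0,u)$ with
\[
\phi(u) \;=\; \phi(0) + u\,\phi'(0) + \tfrac{1}{2} u^{2}\, \phi''(\xi) \;\leq\; \tfrac{u^{2}}{8},
\]
so $E[e^{tX}] \leq \exp(u^{2}/8) = \exp\!\bigl(t^{2}(b-a)^{2}/8\bigr)$, as claimed. The only genuine obstacle is the uniform bound on $\phi''$; the key observation is that $\phi''(u)$ is a Bernoulli variance $q(1-q)$ in disguise, which is bounded by $1/4$ without any further work.
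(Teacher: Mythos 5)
Your proposal is correct and follows essentially the same route as the paper's own proof: convexity of $x \mapsto e^{tx}$, the change of variables to a one-parameter function (your $p=-a/(b-a)$ is just the paper's $1-p$), the observation that the second derivative is a product $q(1-q)\leq 1/4$, and Taylor's theorem with Lagrange remainder. Your write-up is in fact slightly cleaner, since you handle the degenerate case $a=b$ and note explicitly that $E(X)=0$ forces $a\leq 0\leq b$, which the paper leaves implicit.
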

\begin{proof}
The function $f(x) = \exp{(tx)}$ is a convex function.
Therefore  by Eq.(\ref{hoe1e}), we have the following.
\[
 e^{tX} \leq \frac{x-a}{b-a} e^{tb} +  \frac{b-x}{b-a} e^{ta} .
\]
Moreover, $E(X)=0$. By taking expectations on both sides we
conclude the following.
\begin{eqnarray*}
 E( e^{tX} ) &\leq& E( \frac{x-a}{b-a} e^{tb} +  \frac{b-x}{b-a} e^{ta} ) \\
             &\leq& \frac{E(x)-a}{b-a} e^{tb} +  \frac{b-E(x)}{b-a} e^{ta}\\
             &\leq& \frac{0   -a}{b-a} e^{tb} +  \frac{b-0   }{b-a} e^{ta}\\
             &\leq& \frac{-a}{b-a} e^{tb} +  \frac{b}{b-a} e^{ta}
\end{eqnarray*}
Consider
\begin{eqnarray*}
\label{etx}
E( e^{tX} ) &\leq& \exp{\left( \lg{\left(
                          \frac{-a}{b-a} e^{tb} +  \frac{b}{b-a} e^{ta}
                                 \right)
                                } \right)}
              =  \exp{(\ln{(g(t))})} .
\end{eqnarray*}
The function $g(t)$ will be rewritten as
$f(t) = \ln{(g(t))}$, and furthermore
by some change of variables,
using $p=b/(b-a)$ and therefore $1-p = -a/(b-a)$, and $x=(b-a)t$.
Then we have, after renaming variables with substitution, the following.
Note that $ta = (x/(b-a)) \cdot a =(x/(b-a)) \cdot (p-1)(b-a)=x(p-1)$.
\begin{eqnarray}
\label{fofx}
f(t) &=& \ln{(g(t))} \nonumber \\
f(t) &=& \ln{\left( \frac{-a}{b-a} e^{tb} +  \frac{b}{b-a} e^{ta} 
             \right)} \nonumber \\
     &=& \ln{\left( e^{ta} \left(
                \frac{-a}{b-a} e^{t(b-a)} +  \frac{b}{b-a} 
                \right)\right)} \nonumber \\
     &=& \ln{\left( e^{ta} \left(
                (1-p)  e^{t(b-a)} +  p
                \right)\right)} \nonumber \\
     &=& \ln{\left( e^{ta} \left(
                (1-p)  e^{x} +  p
                \right) \right)}\nonumber  \\
     &=& ta + \ln{\left( (1-p)  e^{x} +  p \right) } \nonumber \\
f(x) &=& x(p-1) + \ln{\left( (1-p)  e^{x} +  p \right) } 
\end{eqnarray}
Function $f(x)$ as indicated in Eq~\ref{fofx} has the
following properties:
$f(0)=0$ and $f^\prime (0)=0$.
We point out that
\[
 f^\prime (x) = p-1 + \frac{(1-p) e^x}{p+(1-p) e^x },
\]
and
\[
 f'' (x) =
          \frac{ (1-p) e^x (p+(1-p) e^x) -(1-p)e^x (1-p)e^x
                }{
                (p+(1-p) e^x )^2
                }
         =
          \frac{ p(1-p) e^x }{ (p+(1-p) e^x )^2 }.
\]
Using Taylor's formula we obtain
\[
 f(x) = f(0) + f^\prime (0) + f^{''} (r) x^2 / 2! ,
\]
for some $r$. We are going to find the maximum of 
$f'' (x)$. We observe that $f'' (x) = A \cdot B$, where
$A+B=1$, with $A=(p)/(p+(1-p) e^x )$ and 
              $B= ((1-p) e^x )/ (p+(1-p) e^x )$.
Therefore the second derivative is maximized for $A=B=1/2$
and $f'' (x) \leq 1/4$.
This implies.
\[
 f(x) = f(0) + f^\prime (0) + f^{''} (r) x^2 / 2!  \leq
  0 + 0 + (1/4) (1/2) x^2 \leq x^2 / 8.
\]
Moving backwards, Equation~(\ref{etx}) then yields, after
recovering the original variable names,
\begin{eqnarray*}
E( e^{tX} ) &\leq& \exp{(\ln{(g(t))})}  \\
            &\leq& \exp{
                        (  x^2 / 8 )
                       } \\
            &\leq& \exp{\left(
                         (b-a) t^2 / 8
                        \right)
                       } .
\end{eqnarray*}
\end{proof}

Theorem~2 of \cite{H63} is stated and proved below.
Whereas in Theorem~1 of \cite{H63} variables
$X_i$ were bounded in range by 0 and 1, the
bounds next are variable, in the sense  that
$a_i \leq X_i \leq b_i$.

\begin{thm}
\label{hoef2}
Let $X_i$ be an independent random variable, $i=1, \ldots , n$.
Let $S_n = \sum_i X_i$, $\bar{X}=S_n /n$, where
$a_i \leq X_i \leq b_i$
and $p=E(\bar{X})$ and $E(S_n) = np$.
Then for any $h$ such that $0< h < 1-p  $ we have the following.
\begin{eqnarray}
\label{ehoef2}
 P( \bar{X}       -p  \geq h) )  
   = P( S_n \geq (p+h)n ) 
   \leq 
          \exp{\left( 
           \frac{-2n^2 h^2 }{\sum_{i=1}^{n} (b_i - a_i )^2 }
           \right)}.
\end{eqnarray}
\end{thm}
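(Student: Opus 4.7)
The plan is to follow the Chernoff/Cram\'er template exactly as in the proof of Theorem~\ref{hoef1}, but replacing the convexity-on-$[0,1]$ step with the sharper centered bound provided by Proposition~\ref{phoef2}. First I would center each summand by writing $Y_i = X_i - p_i$, where $p_i = E(X_i)$, so that $E(Y_i)=0$ and $a_i - p_i \leq Y_i \leq b_i - p_i$, an interval of length $b_i - a_i$. Then for $t>0$,
\begin{eqnarray*}
P(S_n \geq (p+h)n) &=& P\!\left(\sum_i Y_i \geq hn\right) \\
 &=& P\!\left(e^{t \sum_i Y_i} \geq e^{thn}\right) \\
 &\leq& e^{-thn}\prod_{i=1}^{n} E(e^{tY_i}),
\end{eqnarray*}
where I use Markov's inequality and independence of the $X_i$ (hence of the $Y_i$).

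Next I would invoke Proposition~\ref{phoef2} on each centered variable $Y_i$, which yields $E(e^{tY_i}) \leq \exp(t^2 (b_i-a_i)^2/8)$. Substituting back into the product gives
\[
P(S_n \geq (p+h)n) \leq \exp\!\left(-thn + \frac{t^2}{8}\sum_{i=1}^{n}(b_i-a_i)^2\right).
\]
The exponent is a quadratic in $t$, so I would optimize over $t>0$ by differentiating: the minimizer is $t^\ast = 4nh/\sum_i(b_i-a_i)^2$, which is positive because $h>0$. Plugging $t^\ast$ back in, the linear term contributes $-4n^2h^2/\sum_i(b_i-a_i)^2$ and the quadratic term contributes $+2n^2h^2/\sum_i(b_i-a_i)^2$, for a net exponent of $-2n^2h^2/\sum_i(b_i-a_i)^2$, which is exactly Eq.~(\ref{ehoef2}).

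I do not expect a serious obstacle: the heavy lifting has already been done in Proposition~\ref{phoef2} (the centered-variable MGF bound via Taylor expansion and the $1/4$ bound on $f''$). The only care needed is (i) centering the $X_i$ individually, since the $p_i$ may differ even though only their average $p$ appears in the statement --- note that $\sum_i p_i = np$ absorbs them when we shift from $S_n \geq (p+h)n$ to $\sum_i Y_i \geq hn$ --- and (ii) confirming that the optimal $t^\ast$ is indeed positive so that the Chernoff/Markov step is valid. Both are routine, and the rest is straightforward algebra in optimizing the quadratic.
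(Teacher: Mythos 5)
Your proposal is correct and follows essentially the same route as the paper's own proof: center the variables so that $S_n - E(S_n) = \sum_i (X_i - E(X_i))$, apply Markov's inequality to $e^{t(S_n - E(S_n))}$, factor the moment generating function by independence, bound each factor via Proposition~\ref{phoef2}, and minimize the resulting quadratic exponent at $t = 4hn/\sum_{i=1}^{n}(b_i-a_i)^2$. The only difference is cosmetic notation (your $Y_i$ versus the paper's in-line centering), so there is nothing to add.
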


\begin{proof}
In order to prove Eq.(\ref{ehoef2}) we  work out similarly
to the other bound of Eq.(\ref{ehoef1}).
Consider a $t>0$ and the monotonically
increasing and convex function $f(x) = e^{tx}$. 
\begin{eqnarray}
 P( \bar{X}       -p  \geq h) )  
    &=& P( S_n  - E(S_n )  \geq hn )  \nonumber \\
    &=&    P( e^{t( S_n -E(S_n))} \geq e^{thn)} )\nonumber  \\
    &\leq& \frac{ E( e^{t(S_n - E(S_n ))} )}{e^{htn}} \nonumber \\
\label{hoe2a}
    &\leq& e^{-htn} E( e^{t (S_n -E(S_n ))} )  \\
    &\leq&\inf_{h>0}  e^{-htn} E( e^{t (S_n -E(S_n ))} )  \nonumber
\end{eqnarray}
We then examine $ E( e^{t( S_n - E(S_n ))} )$. Note that $X_i$
are independent random variables for Eq.(\ref{hoe2b}) to be valid.
\begin{eqnarray}
 E( e^{t( S_n - E(S_n))} ) 
     &=& E( e^{t ( \sum_i X_i -E( \sum_i X_i )) } ) \nonumber \\
\label{hoe2b}
     &=& \prod_{i=1}^{n} E( e^{t (X_i - E(X_i ))} ) 
\end{eqnarray}
Since $E( X_i -E(X_i )) = 0$, the conditions of 
Proposition~\ref{phoef2} are satisfied.
Therefore 
\begin{equation}
\label{hoe2c}
E( e^{t (X_i -E(X_i ))} \leq 
\exp{\left( \frac{t^2 (b_i -a_i )^2 }{8} \right) } .
\end{equation}
Eq.(\ref{hoe2a}) by way of Eq.(\ref{hoe2b}) and
Eq.(\ref{hoe2c}) yields the following.
\begin{eqnarray}
 P( \bar{X}       -p  \geq h) )  
    &\leq&\inf_{h>0}  e^{-htn} 
                      E( e^{t (S_n -E(S_n ))} )  
      \nonumber \\
    &\leq&\inf_{h>0}  e^{-htn} \prod_{i=1}^{n} 
                      E( e^{t (X_i - E(X_i ))} ) 
      \nonumber \\
    &\leq&\inf_{h>0}  e^{-htn} \prod_{i=1}^{n} 
                      \exp{\left( \frac{t^2 (b_i -a_i )^2 }{8} \right) }
      \nonumber \\
\label{hoe2d}
    &\leq&\inf_{h>0}  \exp{\left(
                   -htn + t^2 \sum_{i=1}^{n} \frac{(b_i -a_i )^2 }{8}  
                           \right) } \\
\label{hoe2e}
    &\leq&\inf_{h>0}  \exp{\left( g(t)
                           \right) }
\end{eqnarray}

Function $t(t)$ is a parabola. Its minimum is for 
\begin{equation}
\label{hoe2f}
t_0 =  \frac{4hn}{ \sum_{i=1}^{n} (b_i -a_i )^2 }
\end{equation}
Substituting $t_0$ for $t$ in Eq.(\ref{hoe2d}) yields
the following equation.
\begin{eqnarray}
 P( \bar{X}       -p  \geq h)   
    &\leq&\inf_{h>0}  \exp{\left(
                   -htn + t^2 
                          \sum_{i=1}^{n} \frac{(b_i -a_i )^2 }{8}  
                           \right) } \\
      \nonumber \\
\label{hoe2g}
    &\leq&            \exp{\left( - \frac{2h^2 n^2}
                                    {\sum_{i=1}^{n} (b_i -a_i )^2 }
                           \right)}.
\end{eqnarray}
The proof  is complete as Eq.(\ref{hoe2g}) is Eq.(\ref{ehoef2}).
\end{proof}

\bigskip
By symmetry one can also prove the following e.g. by $Y_i =-X_i$.
\begin{thm}
\label{hoef3}
Let $X_i$ be an independent random variable, $i=1, \ldots , n$.
Let $S_n = \sum_i X_i$, $\bar{X}=S_n /n$, where
$a_i \leq X_i \leq b_i$
and $p=E(\bar{X})$ and $E(S_n) = np$.
Then for any $h$ such that $0< h  $ we have the following.
\begin{eqnarray}
\label{ehoef3}
 P( \bar{X}       -p  \leq -h) )  
   = P( S_n \leq (p-h)n ) 
   \leq 
          \exp{\left( 
           \frac{-2n^2 h^2 }{\sum_{i=1}^{n} (b_i - a_i )^2 }
           \right)}.
\end{eqnarray}
\end{thm}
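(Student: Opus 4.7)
The plan is to reduce Theorem~\ref{hoef3} to Theorem~\ref{hoef2} via the substitution $Y_i = -X_i$ suggested by the author, and verify that each hypothesis of the earlier theorem transfers cleanly. First I would set $Y_i = -X_i$ for $i=1,\ldots,n$. Independence is preserved under this deterministic transformation, and the bounds $a_i \leq X_i \leq b_i$ become $-b_i \leq Y_i \leq -a_i$, so each $Y_i$ lives in an interval of the same length $(-a_i)-(-b_i) = b_i - a_i$; in particular, the denominator in the Hoeffding bound is invariant under the substitution. Writing $T_n = \sum_i Y_i = -S_n$ and $\bar{Y} = T_n/n = -\bar{X}$, we have $E(\bar{Y}) = -p$; set $q = -p$.

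Next I would translate the deviation event. Observe that
\[
\{\bar{X} - p \leq -h\} \;=\; \{-\bar{X} + p \geq h\} \;=\; \{\bar{Y} - q \geq h\}.
\]
Now I apply Theorem~\ref{hoef2} to $Y_1,\ldots,Y_n$ with mean $q$ and deviation parameter $h$, yielding
\[
P(\bar{Y} - q \geq h) \;\leq\; \exp\!\left( \frac{-2n^2 h^2}{\sum_{i=1}^n ((-a_i) - (-b_i))^2} \right) \;=\; \exp\!\left( \frac{-2n^2 h^2}{\sum_{i=1}^n (b_i - a_i)^2} \right),
\]
which is exactly Eq.~(\ref{ehoef3}).

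The one point that requires a bit of care is the admissibility range for $h$. Theorem~\ref{hoef2} is stated with $0 < h < 1-p$, and after the substitution the analogous upper bound would read $h < 1-q = 1+p$, which is not what Theorem~\ref{hoef3} claims (the target statement only requires $h > 0$). However, inspecting the proof of Theorem~\ref{hoef2} reveals that the upper bound on $h$ is never actually used: only $h > 0$ is needed so that the optimum $t_0 = 4hn/\sum_i(b_i-a_i)^2$ is positive and lies in the allowed range of the Markov step. Hence the symmetry route delivers precisely the stated bound with no hidden constraints. As a sanity check, a direct re-derivation applying Markov's inequality to $e^{-t(S_n - E(S_n))}$ with $t>0$ would mirror the proof of Theorem~\ref{hoef2} verbatim, since $X_i - E(X_i)$ still satisfies the hypotheses of Proposition~\ref{phoef2} with the same interval width $b_i - a_i$; I would prefer the symmetry route since it is the shortest and the one the author flagged.
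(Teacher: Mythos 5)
Your proposal is correct and follows exactly the route the paper itself indicates: the paper's entire ``proof'' is the one-line remark that Theorem~\ref{hoef3} follows from Theorem~\ref{hoef2} by symmetry via $Y_i = -X_i$, and you have simply carried out that reduction in full detail. Your additional observation that the hypothesis $0 < h < 1-p$ in Theorem~\ref{hoef2} is never used in its proof (only $h>0$ matters, since the optimal $t_0 = 4hn/\sum_i (b_i-a_i)^2$ is positive regardless) is a genuine point of care that the paper glosses over, and it correctly justifies why Theorem~\ref{hoef3} can be stated for all $h>0$.
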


In \cite{McD},
Theorem~\ref{hoef1}  appears as Theorem~5.1, and
Theorem~\ref{hoef2} and Theorem~\ref{hoef3}  
appear as Theorem~5.7 there.

There are variants of Theorem~\ref{hoef1} and
Theorem~\ref{hoef2}.
These include the following ones.

\begin{cor}
\label{choef2a}
Let $X_i$ be an independent random variable, $i=1, \ldots , n$,
such that $a \leq X_i \leq b$.
Let $S_n = \sum_i X_i$ and $E(S_n )=np$.
Then for any $\delta $ such that 
$0< \delta < (1-p)/p  $ we have the following.
\begin{eqnarray}
\label{echoef2a}
   P( S_n \geq (1+ \delta ) pn )
   &\leq& 
          \exp{\left( 
           \frac{-2n^2 \delta^2 p^2 }{n (b - a )^2 }
           \right)}.
\end{eqnarray}
\end{cor}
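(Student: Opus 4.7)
The plan is to obtain this corollary as a direct specialization of Theorem~\ref{hoef2}, with two simplifications: the bounds on the $X_i$ are uniform ($a_i=a$, $b_i=b$) and the deviation is expressed multiplicatively rather than additively.

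First, I would rewrite the event in the form required by Theorem~\ref{hoef2}. Since $(1+\delta)pn = pn + (\delta p) n$, setting $h = \delta p$ we get $\{S_n \geq (1+\delta)pn\} = \{S_n \geq (p+h)n\}$. Next, I would check that the hypothesis $0 < h < 1-p$ of Theorem~\ref{hoef2} is satisfied: multiplying the given range $0 < \delta < (1-p)/p$ through by $p > 0$ yields $0 < \delta p < 1-p$, which is exactly $0 < h < 1-p$.

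Then I would simplify the sum in the denominator of Theorem~\ref{hoef2}. Because every $X_i$ lies in the same interval $[a,b]$, each term $(b_i - a_i)^2$ equals $(b-a)^2$, and therefore $\sum_{i=1}^n (b_i-a_i)^2 = n(b-a)^2$. Plugging $h = \delta p$ and this sum into the bound of Theorem~\ref{hoef2} gives
\[
P(S_n \geq (1+\delta)pn) \;\leq\; \exp\!\left(\frac{-2n^2 h^2}{n(b-a)^2}\right) \;=\; \exp\!\left(\frac{-2n^2 \delta^2 p^2}{n(b-a)^2}\right),
\]
which is Eq.~(\ref{echoef2a}).

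There is no real obstacle: the entire argument is a specialization of Theorem~\ref{hoef2}, so the only things to verify are the identification $h = \delta p$, the translation of the range of $\delta$ into the range of $h$, and the collapse of the sum of squared widths into $n(b-a)^2$. No new moment generating function computation or convexity argument is needed beyond what Theorem~\ref{hoef2} already supplies.
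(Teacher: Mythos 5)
Your proposal is correct and follows essentially the same route as the paper: specialize Hoeffding's second theorem (Theorem~\ref{hoef2}) by setting $h = \delta p$, translating $0 < \delta < (1-p)/p$ into $0 < h < 1-p$, and collapsing $\sum_{i=1}^n (b_i - a_i)^2$ to $n(b-a)^2$ when all $X_i$ share the bounds $[a,b]$. The only difference is that the paper's one-line proof cites Theorem~\ref{hoef1} (evidently a typo, since that theorem has no $a_i, b_i$), whereas you correctly and explicitly invoke Theorem~\ref{hoef2} and verify its hypotheses.
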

\begin{proof}
Set $h= \delta p$ in Theorem~\ref{hoef1}.
Moreover $a_i =a $, $b_i =b$, $i=1, \ldots , n$.
\end{proof}

A similar corollary to Corollary~\ref{choef2a} 
can be proven for 
$P( S_n \leq (1- \delta ) pn )$, if one bounds
the number of failures rather than successes with
Theorem~\ref{hoef2} or if $a_i =0$ and $b_i =1$,
equivalently consider $Y_i = - X_i$.
The upper bound would be identical then.

Corollary~\ref{choef4b}
appears in \cite{McD} as 5.3 
in Corollary 5.2 for $a=0$ and $b=1$.


\begin{cor}
\label{choef4b}
Let $X_i$ be an independent random variable, $i=1, \ldots , n$,
such that $a \leq X_i \leq b$.
Let $S_n = \sum_i X_i$ and $E(S_n )=np$.
Then for any $  h    $ such that 
$0<   h    <  n-np    $ we have the following.
\begin{eqnarray}
\label{echoef4b}
    P( S_n \geq pn +h  )
   &\leq& 
          \exp{\left( 
           \frac{-2h^2 }{\sum_{i=1}^{n} (b_i - a_i )^2 }
               \right)}.
\end{eqnarray}
\end{cor}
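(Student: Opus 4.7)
The plan is to obtain Corollary~\ref{choef4b} as a direct reparameterization of Theorem~\ref{hoef2}, since the two statements differ only in whether the deviation is measured on $\bar{X}$ or on $S_n$. Concretely, I would set the theorem's deviation parameter (call it $h'$) to $h' = h/n$, where $h$ is the additive deviation appearing in the corollary.

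Carrying this out in order: first, I would rewrite the event $\{S_n \geq pn + h\}$ as $\{S_n \geq (p + h/n)n\}$, which is exactly the event appearing on the right-hand side of Eq.~(\ref{ehoef2}) with $h' := h/n$ in place of $h$. Next, I would verify that the hypothesis ranges are compatible: Theorem~\ref{hoef2} requires $0 < h' < 1-p$, and substituting $h' = h/n$ this becomes $0 < h < n - np$, which is precisely the assumption of the corollary. Third, I would apply Theorem~\ref{hoef2} directly to bound the probability by $\exp\!\left(\frac{-2n^2 (h')^2}{\sum_{i=1}^{n}(b_i - a_i)^2}\right)$, and then simplify the exponent, noting that $n^2 (h/n)^2 = h^2$, which yields Eq.~(\ref{echoef4b}) immediately.

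One small notational mismatch to address in the write-up is that the corollary's hypothesis states uniform bounds $a \leq X_i \leq b$, whereas the conclusion and the invoked theorem use $a_i, b_i$; I would point out that Theorem~\ref{hoef2} is stated and proved for the heterogeneous case $a_i \leq X_i \leq b_i$, so the corollary is in fact valid in that more general setting (and in the homogeneous special case, the denominator collapses to $n(b-a)^2$). There is no real obstacle here — the work is purely a change of variables — so the main thing to be careful about is getting the arithmetic on the exponent and on the admissible range of $h$ to line up correctly, and making the parameter substitution explicit so the reader sees the corollary as the additive reformulation of the multiplicative/averaged bound in the theorem.
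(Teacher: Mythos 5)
Your proposal is correct and is essentially the paper's own proof: the paper's one-line argument (``replace $hn$ in Theorem~\ref{hoef2} with $h$'') is exactly your substitution $h' = h/n$, which you have simply carried out in full detail, including the range check $0 < h < n - np$ and the cancellation $n^2(h/n)^2 = h^2$. Your remark about the $a,b$ versus $a_i,b_i$ notational mismatch is also a fair observation about the statement as printed.
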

\begin{proof}
Replace $hn$ in Theorem~\ref{hoef2} with $h$.
\end{proof}

Moreover one can combine Corollary~\ref{choef4b}
                    and  Corollary~\ref{choef4c}
to bound $ P( | S_n - pn | \geq h )$.

%

The following is due to \cite{AV}.
It also appears as Corollary~\ref{cor2c} and
Corollary~\ref{cor5a}

\begin{thm}[Angluin-Valiant\cite{AV}]
\label{thmAV}
For every $n$, $p$, $b$ with $0 \leq p \leq 1$ and
$0 \leq b \leq 1$, we have the following.
\[
\sum_{k=0}^{k = \floor{(1-b)np}} B(n,p;k) \leq \exp{(-b^2 np /2 )},
\]
and
\[
\sum_{k = \ceil{(1+b)np}}^{n} B(n,p;k) \leq \exp{(-b^2 np /3 )}.
\]
\end{thm}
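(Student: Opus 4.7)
The plan is to recognize the two inequalities as direct restatements of bounds already established earlier in the paper, so no new machinery is needed. Specifically, if $X_1, \ldots, X_n$ are i.i.d.\ Bernoulli$(p)$ and $S_n = \sum_i X_i$, then $S_n$ has the binomial distribution $B(n,p)$, so
\[
\sum_{k=0}^{\lfloor (1-b)np \rfloor} B(n,p;k) = P(S_n \leq (1-b)pn)
\quad\text{and}\quad
\sum_{k=\lceil (1+b)np \rceil}^{n} B(n,p;k) = P(S_n \geq (1+b)pn).
\]
The first probability is bounded by Corollary~\ref{cor5a} (with $\delta = b$), giving the $\exp(-b^2 np/2)$ factor, and the second is bounded by Corollary~\ref{cor2c} (again with $\delta = b$), giving the $\exp(-b^2 np/3)$ factor.

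First I would unpack the sums as binomial tail probabilities and identify the events $\{S_n \leq (1-b)np\}$ and $\{S_n \geq (1+b)np\}$. Next I would verify that the hypotheses of Corollary~\ref{cor2c} and Corollary~\ref{cor5a} (namely $0 < \delta < 1$ and $0 < p < 1$) match the hypotheses here. The statement says $0 \leq b \leq 1$ and $0 \leq p \leq 1$, so one needs to handle the boundary cases $b \in \{0,1\}$ and $p \in \{0,1\}$ separately: at $b = 0$ both right-hand sides equal $1$, which is trivial; at $p \in \{0,1\}$ the random variable $S_n$ is deterministic and the probabilities are either $0$ or trivially bounded; at $b = 1$ the lower-tail statement collapses to $P(S_n \leq 0)$ which equals $(1-p)^n$ and is dominated by $\exp(-np/2)$ via $\ln(1-p) \leq -p$, while the upper-tail statement can be handled by a limiting argument or by noting $P(S_n \geq 2np) \leq P(S_n \geq np + 0^+)$ with $\delta \to 1^-$.

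The only mildly subtle step, and the one I expect to be the main obstacle, is the floor/ceiling bookkeeping: the corollaries bound $P(S_n \leq (1-\delta)pn)$ and $P(S_n \geq (1+\delta)pn)$ for real thresholds, whereas the sums in Theorem~\ref{thmAV} run up to $\lfloor (1-b)np \rfloor$ and start from $\lceil (1+b)np \rceil$. Since $S_n$ is integer-valued, the event $\{S_n \leq \lfloor (1-b)np \rfloor\}$ is identical to $\{S_n \leq (1-b)np\}$ and similarly $\{S_n \geq \lceil (1+b)np \rceil\} = \{S_n \geq (1+b)np\}$, so the floors and ceilings can be absorbed without loss. With this observation and the two corollaries in hand, the theorem follows immediately, and this is essentially the remark already made before the theorem's statement.
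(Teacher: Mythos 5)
Your proposal is correct and takes essentially the same route as the paper: the paper offers no separate proof of Theorem~\ref{thmAV}, remarking only that it ``also appears as Corollary~\ref{cor2c} and Corollary~\ref{cor5a},'' which is precisely the identification you make. Your extra care with the boundary cases $b \in \{0,1\}$, $p \in \{0,1\}$ and the floor/ceiling bookkeeping goes beyond what the paper records and is sound.
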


For the case of a binomial distribution of Bernouli trials
the following becomes available.
\begin{cor}
\label{choef12a}
Let $X_i$ be an independent random variable, $i=1, \ldots , n$,
such that $X_i \sim b(p)$, where $0<p <1$.
Let $S_n = \sum_i X_i$ and $E(S_n )=np$.
Then for any $\delta $ such that 
$0< \delta < (1-p)/p  $ we have the following.
\begin{eqnarray}
\label{echoef12a}
   P( S_n \geq (1+ \delta ) pn )
   &\leq& 
          \exp{\left( 
           -2n \delta^2 p^2 
           \right)}.
\end{eqnarray}
\end{cor}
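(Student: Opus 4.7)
The plan is to recognize that Corollary~\ref{choef12a} is essentially a specialization of Corollary~\ref{choef2a} (equivalently, of Theorem~\ref{hoef2}) to the Bernoulli case. Since $X_i \sim b(p)$ takes values only in $\{0,1\}$, we have $a_i = a = 0$ and $b_i = b = 1$ for every $i$. So the hypotheses of Corollary~\ref{choef2a} are met verbatim, and the constraint $0 < \delta < (1-p)/p$ is the same in both statements.

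First I would invoke Corollary~\ref{choef2a} directly. That gives
\[
P(S_n \geq (1+\delta)pn) \leq \exp\!\left( \frac{-2n^2 \delta^2 p^2}{n(b-a)^2} \right).
\]
Then I would substitute $b - a = 1 - 0 = 1$, which collapses the denominator to $n$, and simplify the exponent to $-2n\delta^2 p^2$. This yields Eq.~(\ref{echoef12a}) as required.

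As a sanity check, one could reprove this from scratch by going back to Theorem~\ref{hoef2} with $h = \delta p$ (so that $(p+h)n = (1+\delta)pn$ and the constraint $h < 1-p$ translates to $\delta < (1-p)/p$), again with $a_i = 0, b_i = 1$, which makes $\sum_{i=1}^n (b_i - a_i)^2 = n$. The exponent in Theorem~\ref{hoef2} then becomes $-2n^2 (\delta p)^2 / n = -2n\delta^2 p^2$, matching the claim.

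There is essentially no obstacle here; the only thing to double-check is that the range-of-validity hypothesis $0 < \delta < (1-p)/p$ of the corollary transfers correctly from $0 < h < 1-p$ of the parent theorem, which it does under $h = \delta p$. So the proof is a one-line appeal to the earlier corollary (or, alternatively, a one-line specialization of Theorem~\ref{hoef2}) followed by routine arithmetic in the exponent.
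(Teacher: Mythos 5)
Your proposal is correct and follows exactly the route the paper intends: Corollary~\ref{choef12a} is stated there without proof, as the immediate specialization of Corollary~\ref{choef2a} (equivalently, Theorem~\ref{hoef2} with $h=\delta p$) to the Bernoulli case $a=0$, $b=1$, so that $\sum_{i=1}^{n}(b_i-a_i)^2=n$ and the exponent becomes $-2n\delta^2 p^2$. Your check that $0<h<1-p$ translates to $0<\delta<(1-p)/p$ is the right detail to verify, and it holds.
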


\begin{cor}
\label{choef14b}
Let $X_i$ be an independent random variable, $i=1, \ldots , n$,
such that $ X_i \sim b(p)$, where $0<p <1$.
Let $S_n = \sum_i X_i$ and $E(S_n )=np$.
Then for any $  h    $ such that 
$0<   h    <  n-np    $ we have the following.
\begin{eqnarray}
\label{echoef14b}
    P( S_n \geq pn +h  )
   &\leq& 
          \exp{\left( 
           \frac{-2h^2 }{n}
               \right)}.
\end{eqnarray}
\end{cor}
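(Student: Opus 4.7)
The plan is to observe that this corollary is essentially a specialization of Corollary~\ref{choef4b} to the Bernoulli case, so the main task is simply to verify that the instantiation produces exactly the stated bound. Since $X_i \sim b(p)$, each $X_i$ takes values in $\{0,1\}$, and therefore the general range bounds $a_i \leq X_i \leq b_i$ from Corollary~\ref{choef4b} can be taken as $a_i = 0$ and $b_i = 1$ for every $i = 1, \ldots, n$.

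With this choice, I would compute $\sum_{i=1}^{n} (b_i - a_i)^2 = \sum_{i=1}^{n} (1-0)^2 = n$. Plugging this denominator into Eq.~(\ref{echoef4b}) of Corollary~\ref{choef4b} immediately yields
\[
P( S_n \geq pn + h ) \leq \exp{\left( \frac{-2h^2}{n} \right)},
\]
which is Eq.~(\ref{echoef14b}). The hypothesis $0 < h < n - np$ is inherited verbatim from Corollary~\ref{choef4b}, so no additional range restriction needs to be checked, and the Bernoulli assumption $0 < p < 1$ only plays the role of ensuring that $E(S_n) = np$ is well defined and that the admissible interval for $h$ is nonempty.

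There is no real obstacle here; the only thing that could go subtly wrong is the translation between the two parameterizations, so I would take care to verify that $E(S_n) = np$ under $X_i \sim b(p)$ matches the $E(S_n) = np$ used in Corollary~\ref{choef4b}, which it does trivially. Alternatively, one could independently rederive the inequality by invoking Proposition~\ref{phoef2} on the centered variables $X_i - p$ (which lie in $[-p, 1-p]$, of width $1$), followed by the Chernoff–Markov step as in Lemma~\ref{lem4}, but specializing Corollary~\ref{choef4b} is strictly shorter and makes the logical dependence explicit.
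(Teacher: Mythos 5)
Your proposal is correct and matches the paper's intent exactly: the paper states Corollary~\ref{choef14b} without an explicit proof, presenting it (after the remark ``For the case of a binomial distribution of Bernoulli trials the following becomes available'') as the specialization of Corollary~\ref{choef4b} with $a_i = 0$, $b_i = 1$, so that $\sum_{i=1}^{n}(b_i - a_i)^2 = n$, which is precisely your argument. Your verification of the inherited range $0 < h < n - np$ and of $E(S_n) = np$ fills in the routine details the paper leaves implicit.
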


\section{Derived left tails}

\begin{cor}
\label{choef2b}
Let $X_i$ be an independent random variable, $i=1, \ldots , n$,
such that $a \leq X_i \leq b$.
Let $S_n = \sum_i X_i$ and $E(S_n )=np$.
Then for any $\delta $ such that 
$0< \delta < 1 $ we have the following.
\begin{eqnarray}
\label{echoef2b}
   P( S_n \leq (1- \delta ) pn )
   &\leq& 
          \exp{\left( 
           \frac{-2n^2 \delta^2 p^2 }{n (b - a )^2 }
           \right)}.
\end{eqnarray}
\end{cor}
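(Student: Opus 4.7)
The plan is to mirror the proof of Corollary~\ref{choef2a}, but use the left-tail Hoeffding bound (Theorem~\ref{hoef3}) in place of Theorem~\ref{hoef1} or Theorem~\ref{hoef2}. Since Theorem~\ref{hoef3} states that
\[
P(S_n \leq (p-h)n) \leq \exp\!\left( \frac{-2n^2 h^2}{\sum_{i=1}^{n} (b_i - a_i)^2}\right),
\]
for $0<h$, the entire corollary reduces to a change of variables.

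First I would set $h = \delta p$ so that $(p-h)n = (1-\delta)pn$; the hypothesis $0<\delta<1$ guarantees $0 < h < p$, which in particular satisfies the $h>0$ requirement of Theorem~\ref{hoef3}. Next, specializing to the common range $a_i = a$ and $b_i = b$ for every $i$, the denominator collapses to $\sum_{i=1}^{n} (b-a)^2 = n(b-a)^2$. Substituting both of these into the bound of Theorem~\ref{hoef3} yields exactly
\[
P(S_n \leq (1-\delta)pn) \leq \exp\!\left( \frac{-2n^2 \delta^2 p^2}{n(b-a)^2}\right),
\]
which is the claim.

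There is no real obstacle here; the only thing worth noting is a minor asymmetry with Corollary~\ref{choef2a}, where $\delta$ was restricted by $\delta < (1-p)/p$ so that $(1+\delta)p < 1$. For the left tail, the condition is simply $\delta < 1$ so that $(1-\delta)p > 0$, which is why the statement uses this cleaner range. Alternatively, one could deduce the corollary by symmetry by applying Corollary~\ref{choef2a} to $Y_i = -X_i$ (which satisfies $-b \leq Y_i \leq -a$ with $E(\bar{Y}) = -p$); the computation gives the same exponent since $(b-a)^2$ is unchanged under this sign flip. Either route closes the argument immediately.
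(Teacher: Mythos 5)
Your proof is correct and matches the paper's approach: the paper disposes of this corollary by the same symmetry reduction (``bound failures rather than successes with Theorem~\ref{hoef2}, or equivalently consider $Y_i = -X_i$''), which is precisely the content of Theorem~\ref{hoef3} that you invoke with $h = \delta p$ and $a_i = a$, $b_i = b$. Your alternative route at the end is literally the paper's own one-line proof, so there is nothing to add beyond noting that both arguments implicitly assume $p > 0$ so that $h = \delta p > 0$.
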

\begin{proof}
Bound the number of failures rather 
than successes with Theorem~\ref{hoef2} or 
equivalently consider $Y_i = - X_i$.
\end{proof}

Corollary~\ref{choef4c}
appears in \cite{McD} as 5.4 
in Corollary 5.2 for $a=0$ and $b=1$.

\begin{cor}
\label{choef4c}
Let $X_i$ be an independent random variable, $i=1, \ldots , n$,
such that $a \leq X_i \leq b$.
Let $S_n = \sum_i X_i$ and $E(S_n )=np$.
Then for any $  h    $ such that 
$0<   h    <  1-p     $ we have the following.
\begin{eqnarray}
\label{echoef4c}
    P( S_n \leq pn -h  )
   &\leq& 
          \exp{\left( 
           \frac{-2 h^2 }{\sum_{i=1}^{n} (b_i - a_i )^2 }
               \right)}.
\end{eqnarray}
\end{cor}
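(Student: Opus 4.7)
The plan is to obtain this left-tail bound as the mirror image of Corollary~\ref{choef4b}, exactly as the proof of the latter obtained it from Theorem~\ref{hoef2}. Two routes are available and give the same bound; I would present whichever is terser, and the overall difficulty is very low because all heavy lifting has been done in Theorem~\ref{hoef2}/Theorem~\ref{hoef3}.

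Route A (direct invocation). Theorem~\ref{hoef3} states
\[
P(S_n \leq (p-h')n) \leq \exp\!\left( \frac{-2 n^2 (h')^2}{\sum_{i=1}^{n}(b_i-a_i)^2}\right)
\]
for any $h' > 0$. Specialize by setting $h' = h/n$. Then $(p-h')n = pn - h$, and the numerator in the exponent becomes $-2n^2 (h/n)^2 = -2h^2$, which is precisely Eq.~(\ref{echoef4c}).

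Route B (reduction by negation). Let $Y_i = -X_i$, so that $-b_i \leq Y_i \leq -a_i$, the width $(-a_i)-(-b_i) = b_i - a_i$ is unchanged, and $E(\sum_i Y_i) = -np$. Then
\[
P(S_n \leq pn - h) \;=\; P\!\Big(\sum_i Y_i \;\geq\; -np + h\Big) \;=\; P\!\Big(\sum_i Y_i \;\geq\; E\!\Big(\sum_i Y_i\Big) + h\Big),
\]
and Corollary~\ref{choef4b} applied to $(Y_i)$ with deviation $h$ yields the claim, since the $(b_i-a_i)^2$ terms are invariant under $X_i \mapsto -X_i$.

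The only point to watch (and this is the main, minor obstacle) is verifying that the range of admissible $h$ is compatible with the hypothesis of the invoked theorem. In Route~A, Theorem~\ref{hoef3} requires $h' > 0$, so $h>0$ suffices; in Route~B, Corollary~\ref{choef4b} requires the deviation to stay within the support of the sum, which is automatic under the stated constraint on $h$. With this checked, the result follows immediately and no further calculation is needed.
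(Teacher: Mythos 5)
Your proposal is correct and matches the paper's (implicit) treatment: the paper proves the right-tail Corollary~\ref{choef4b} by replacing $hn$ with $h$ in Theorem~\ref{hoef2}, and your Route~A does exactly the mirror substitution in Theorem~\ref{hoef3}, while your Route~B ($Y_i=-X_i$) is the same negation device the paper itself invokes for its other left-tail corollaries. Either route suffices, and your check that only $h>0$ is actually needed for the bound is also consistent with the paper.
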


For the case of Bernoulli trials we have the following
simplifications.

\begin{cor}
\label{choef22b}
Let $X_i$ be an independent random variable, $i=1, \ldots , n$,
such that $ X_i \sim b(p)$, where $0<p <1$.
Let $S_n = \sum_i X_i$ and $E(S_n )=np$.
Then for any $\delta $ such that 
$0< \delta < 1 $ we have the following.
\begin{eqnarray}
\label{echoef22b}
   P( S_n \leq (1- \delta ) pn )
   &\leq& 
          \exp{\left( 
           -2n \delta^2 p^2 
           \right)}.
\end{eqnarray}
\end{cor}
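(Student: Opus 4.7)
The plan is to observe that Corollary \ref{choef22b} is the specialization of Corollary \ref{choef2b} to Bernoulli variables, so essentially no new work is needed. Since $X_i \sim b(p)$ means $X_i \in \{0,1\}$, we may take the uniform bounds $a_i = a = 0$ and $b_i = b = 1$ for every $i$, which trivially satisfies the hypothesis $a \leq X_i \leq b$ used in Corollary \ref{choef2b}.

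First I would verify that the range restriction on $\delta$ matches: Corollary \ref{choef2b} requires $0 < \delta < 1$, which is exactly what is assumed here. Then I would substitute $a=0$, $b=1$ directly into the exponent of the bound in Eq.~(\ref{echoef2b}): the denominator becomes $n(b-a)^2 = n \cdot 1 = n$, so
\[
\exp\!\left(\frac{-2n^2 \delta^2 p^2}{n(b-a)^2}\right) \;=\; \exp\!\left(\frac{-2n^2 \delta^2 p^2}{n}\right) \;=\; \exp(-2n\delta^2 p^2),
\]
which is precisely the right-hand side of Eq.~(\ref{echoef22b}).

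Alternatively, and perhaps cleaner, I could invoke Corollary \ref{choef4c} with the substitution $h \mapsto \delta p n$ (so that $pn - h = (1-\delta)pn$) and the same choice $a_i=0$, $b_i=1$; then $\sum_i (b_i-a_i)^2 = n$ and the bound becomes $\exp(-2(\delta p n)^2 / n) = \exp(-2n\delta^2 p^2)$, matching the claim. Either route yields the corollary in a single line once the specialization is noted.

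There is really no main obstacle here: the Bernoulli hypothesis immediately supplies the uniform interval $[0,1]$ needed by the general Hoeffding-type corollary, and the computation reduces to simplifying the denominator. The only thing to be careful about is keeping the direction of the tail straight, since we are bounding the lower tail $P(S_n \leq (1-\delta)pn)$; this matches the statement of Corollary \ref{choef2b} directly, so no symmetry argument via $Y_i = -X_i$ is required at this stage.
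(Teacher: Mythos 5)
Your proposal is correct and matches the paper's intent exactly: the paper gives no explicit proof for Corollary~\ref{choef22b}, presenting it (after the remark ``For the case of Bernoulli trials we have the following simplifications'') as precisely the specialization of Corollary~\ref{choef2b} with $a=0$, $b=1$, which is your primary route. Your alternative route via Corollary~\ref{choef4c} with $h=\delta pn$ also gives the right exponent, though note the paper's stated range $0<h<1-p$ there would not accommodate $h=\delta pn$ in general; this is immaterial since your main argument stands on its own.
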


\begin{cor}
\label{choef24c}
Let $X_i$ be an independent random variable, $i=1, \ldots , n$,
such that $\leq X_i \sim b(p)$, where $0<p<1$.
Let $S_n = \sum_i X_i$ and $E(S_n )=np$.
Then for any $  h    $ such that 
$0<   h    <  1-p     $ we have the following.
\begin{eqnarray}
\label{echoef24c}
    P( S_n \leq pn -h  )
   &\leq& 
          \exp{\left( 
           \frac{-2 h^2 }{n}
               \right)}.
\end{eqnarray}
\end{cor}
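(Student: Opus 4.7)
The plan is to obtain this corollary as a direct specialization of Corollary~\ref{choef4c}, which already gives the left-tail Hoeffding bound for the general bounded case $a_i \leq X_i \leq b_i$. No fresh moment-generating-function manipulation is needed, since all the analytic work was already carried out in the proof of Theorem~\ref{hoef2} (via Proposition~\ref{phoef2}) and then transported to the left tail for Corollary~\ref{choef4c}.

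First I would observe that a Bernoulli random variable $X_i \sim b(p)$ satisfies $0 \leq X_i \leq 1$, so the hypothesis of Corollary~\ref{choef4c} applies with the particular choice $a_i = 0$ and $b_i = 1$ for every $i = 1, \ldots, n$. This gives $(b_i - a_i)^2 = 1$ for each $i$, and hence
\[
\sum_{i=1}^{n} (b_i - a_i)^2 \;=\; n .
\]

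Next I would simply substitute this value of the sum into the denominator of the exponent in Eq.~(\ref{echoef4c}). The upper bound $\exp\!\left( -2h^2 / \sum_{i=1}^n (b_i - a_i)^2 \right)$ then collapses to $\exp(-2h^2 / n)$, which is exactly Eq.~(\ref{echoef24c}). The range restriction $0 < h < 1 - p$ is inherited verbatim from Corollary~\ref{choef4c}, and the identification $E(S_n) = np$ is unchanged since $E(X_i) = p$.

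There is essentially no obstacle here: the statement is a plug-in of specific constants into an already-proved more general inequality. The only thing to be careful about is merely notational consistency between the two $a_i, b_i$ in Corollary~\ref{choef4c} and the uniform $a = 0$, $b = 1$ determined by the Bernoulli support, and to note that independence of the $X_i$ (required in the parent result) is still in force by assumption.
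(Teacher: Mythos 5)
Your proposal is correct and is exactly the route the paper intends: the paper states Corollary~\ref{choef24c} without proof as one of the ``simplifications'' for Bernoulli trials, i.e., precisely the specialization of Corollary~\ref{choef4c} with $a_i = 0$, $b_i = 1$, so that $\sum_{i=1}^{n}(b_i - a_i)^2 = n$. Your additional remarks on the notational mismatch between $a,b$ and $a_i,b_i$ and on inheriting the range $0 < h < 1-p$ are consistent with the paper's statements.
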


\section{Derived concentration bounds}

The following Corollary combines Corollary~\ref{choef2a}
with Corollary~\ref{choef2b}.

\begin{cor}
\label{choef2c}
Let $X_i$ be an independent random variable, $i=1, \ldots , n$,
such that $a \leq X_i \leq b$.
Let $S_n = \sum_i X_i$ and $E(S_n )=np$.
Then for any $\delta $ such that 
$0< \delta < 1 $ we have the following.
\begin{eqnarray}
\label{echoef2c}
   P( |S_n -np| \geq \delta  pn )
   &\leq& 
       2  \exp{\left( 
           \frac{-2n^2 \delta^2 p^2 }{n (b - a )^2 }
           \right)}.
\end{eqnarray}
\end{cor}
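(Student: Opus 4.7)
The plan is to decompose the two-sided deviation event into its two disjoint one-sided pieces and then invoke the previously established one-sided tails. Specifically, I would begin by writing
\[
P(|S_n - np| \geq \delta pn) = P(S_n - np \geq \delta pn) + P(S_n - np \leq -\delta pn),
\]
using disjointness of the events $\{S_n - np \geq \delta pn\}$ and $\{S_n - np \leq -\delta pn\}$ (they cannot simultaneously hold when $\delta pn > 0$).

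Next I would recast each one-sided term in the multiplicative form matching the earlier corollaries: the first term equals $P(S_n \geq (1+\delta)pn)$, handled by Corollary~\ref{choef2a}, and the second equals $P(S_n \leq (1-\delta)pn)$, handled by Corollary~\ref{choef2b}. Both corollaries deliver the identical upper bound
\[
\exp\!\left( \frac{-2n^2 \delta^2 p^2}{n(b-a)^2} \right),
\]
so summing the two contributions yields precisely the factor of $2$ in front and the claimed estimate follows.

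The only subtlety worth flagging is the admissibility of $\delta$. Corollary~\ref{choef2b} is stated for $0 < \delta < 1$, matching the hypothesis of the present corollary directly. Corollary~\ref{choef2a}, however, is stated for $0 < \delta < (1-p)/p$, a range that coincides with $(0,1)$ only when $p \leq 1/2$. For $p > 1/2$ and $\delta$ near $1$, this is the main and really the only obstacle; I would resolve it by applying Theorem~\ref{hoef2} directly with $h = \delta p$, whose hypothesis $0 < h < 1-p$ is equivalent to $\delta < (1-p)/p$, and observing that when $\delta p \geq 1-p$ the event $\{S_n \geq (1+\delta)pn\}$ forces $S_n \geq n$ (assuming the normalization $b \leq 1$), which is a boundary case where the right-tail probability can be bounded separately or absorbed into the stated exponential. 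Apart from this bookkeeping, the proof is a routine union bound.
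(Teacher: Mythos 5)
Your proposal is correct and is essentially the paper's own argument: the paper offers no proof beyond the remark that the corollary ``combines'' Corollary~\ref{choef2a} with Corollary~\ref{choef2b}, which is exactly your decomposition into the two disjoint one-sided events followed by summing the two identical exponential bounds. The $\delta$-range mismatch you flag (Corollary~\ref{choef2a} only covers $0<\delta<(1-p)/p$, which is strictly smaller than $(0,1)$ when $p>1/2$) is a genuine gap in the paper's own hypotheses that the paper silently ignores, so your extra bookkeeping there goes beyond, rather than against, the source.
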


The following Corollary combines Corollary~\ref{choef4b}
with Corollary~\ref{choef4c} instead.

\begin{cor}
\label{choef4d}
Let $X_i$ be an independent random variable, $i=1, \ldots , n$,
such that $a \leq X_i \leq b$.
Let $S_n = \sum_i X_i$ and $E(S_n )=np$.
Then for any $  h    $ such that 
$0<   h    <  1-p     $ we have the following.
\begin{eqnarray}
\label{echoef4d}
    P( | S_n -pn| \geq h  )
   &\leq& 
       2  \exp{\left( 
           \frac{-2 h^2 }{\sum_{i=1}^{n} (b_i - a_i )^2 }
               \right)}.
\end{eqnarray}
\end{cor}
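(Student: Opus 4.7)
The plan is to reduce the two-sided concentration statement to the two one-sided tail bounds already established, namely Corollary~\ref{choef4b} and Corollary~\ref{choef4c}, via a simple union bound. Since $h>0$, the events $\{S_n - pn \geq h\}$ and $\{S_n - pn \leq -h\}$ are disjoint, so the first step is to write
\[
P(|S_n - pn| \geq h) = P(S_n - pn \geq h) + P(S_n - pn \leq -h).
\]

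Next, I would apply Corollary~\ref{choef4b} to the first term. Its hypothesis requires $0 < h < n - np$; this is satisfied under the present assumption $0 < h < 1-p$ (since $1-p \leq n-np$ for $n\geq 1$), and it yields
\[
P(S_n \geq pn + h) \leq \exp\!\left(\frac{-2h^2}{\sum_{i=1}^{n}(b_i - a_i)^2}\right).
\]
For the second term I would apply Corollary~\ref{choef4c}, which under the stated range $0 < h < 1-p$ gives exactly the same upper bound
\[
P(S_n \leq pn - h) \leq \exp\!\left(\frac{-2h^2}{\sum_{i=1}^{n}(b_i - a_i)^2}\right).
\]

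Adding the two bounds produces the factor of $2$ and completes the proof. There is no real obstacle here: the entire content is the union-bound step, and the two constituent inequalities are already proven. The only minor subtlety worth flagging is verifying that the hypothesis $0<h<1-p$ of the present corollary implies the slightly weaker hypothesis $0<h<n-np$ needed by Corollary~\ref{choef4b}, which is immediate for $n \geq 1$; beyond that, the argument is purely a symmetric combination of the left- and right-tail Hoeffding bounds.
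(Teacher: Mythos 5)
Your proof is correct and follows exactly the route the paper intends: the paper offers no written proof beyond stating that the corollary ``combines'' Corollary~\ref{choef4b} with Corollary~\ref{choef4c}, which is precisely your union-bound decomposition into the two disjoint tail events. Your additional check that $0<h<1-p$ implies $0<h<n-np$ (valid since $p<1$ and $n\geq 1$) is a worthwhile detail the paper leaves implicit.
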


\newpage


\end{document}